%
%

\def\breakequations{1} 



\documentclass[twocolumn]{IEEEtran}


\usepackage{url}
\usepackage{latexsym}
\usepackage{amscd,amsmath}
\usepackage{amsfonts}
\usepackage{color}
\usepackage{float}
\usepackage{longtable}
\usepackage{graphicx}
\usepackage{xspace}


\newtheorem{thm}{Theorem}[section]
\newtheorem{lem}[thm]{Lemma}
\newtheorem{cor}[thm]{Corollary}
\newtheorem{propo}[thm]{Proposition}
\newtheorem{clm}[thm]{Claim}
\newtheorem{defn}[thm]{Definition}
\newtheorem{assm}[thm]{Assumption}
\newtheorem{rem}[thm]{Remark}
\newtheorem{obs}[thm]{Observation}
\newtheorem{egs}[thm]{Example}

\newtheorem{fct}[thm]{Fact}
\newtheorem{cons}[thm]{Construction}
\newtheorem{nte}[thm]{Note}

\newenvironment{theorem}{\begin{thm}}{\end{thm}}
\newenvironment{lemma}{\begin{lem}}{\end{lem}}

\newenvironment{definition}{\begin{defn}}{\end{defn}}

\newenvironment{remark}{\begin{rem}\begin{em}}{\end{em}\end{rem}}

\newcommand{\secref}[1]{Section~\ref{#1}}

\newcommand{\thref}[1]{Theorem~\ref{#1}}

\newcommand{\lemref}[1]{Lemma~\ref{#1}}

\newcommand{\remref}[1]{Remark~\ref{#1}}

\renewcommand{\eqref}[1]{\mbox{Equation~(\ref{#1})}}

\newcommand{\calA}{{\cal A}}
\newcommand{\calB}{{\cal B}}
\newcommand{\calC}{{\cal C}}
\newcommand{\calD}{{\cal D}}

\newcommand{\calG}{{\cal G}}

\newcommand{\calT}{{\cal T}}

\newcommand{\calV}{{\cal V}}
\newcommand{\calW}{{\cal W}}
\newcommand{\calX}{{\cal X}}
\newcommand{\calY}{{\cal Y}}




\newcommand{\N}{{{\mathbb N}}}

\def\bits{\{0,1\}}



\newcommand{\Prob}[1]{{\Pr\left[\,{#1}\,\right]}}
\newcommand{\Probsub}[2]{{\Pr_{#1}\left[\,{#2}\,\right]}}

\def\getsr{\stackrel{{\scriptscriptstyle\$}}{\leftarrow}}

\def\suchthatt{\: :\:}
\newcommand{\suchthat}{{\mbox{s.t.\ }}}

\newcommand{\Colon}{{:\;\;}}



\DeclareMathOperator{\Ext}{\mathsf{Ext}}
\DeclareMathOperator{\Rec}{\mathsf{Rec}}
\DeclareMathOperator{\HD}{\mathsf{HD}}

\DeclareMathOperator{\im}{Im}
\DeclareMathOperator{\Samp}{\mathsf{Samp}}
\DeclareMathOperator{\Bad}{\mathcal{BAD}}
\newcommand{\alice}{\ensuremath{\mathsf{Alice}}\xspace}
\newcommand{\bob}{\ensuremath{\mathsf{Bob}}\xspace}
\newcommand{\view}[3]{\ensuremath{\mathsf{view}_{#1}^{#2}(#3)}\xspace}
\newcommand{\trans}[2]{\ensuremath{\mathsf{trans}^{#1}(#2)}\xspace}
\newcommand{\viewbob}[2]{\view{\bob}{#1}{#2}}
\newcommand{\viewalice}[2]{\view{\alice}{#1}{#2}}
\def\CP{\mathsf{CP}}
\def\OP{\mathsf{OP}}
\def\OT{\mathsf{OT}}
\def\IH{\mathsf{IH}}

\def\lcor{\lambda_{\mathsf{C}}}
\def\lhid{\lambda_{\mathsf{H}}}
\def\lbin{\lambda_{\mathsf{B}}}
\def\lalice{\lambda_{\mathsf{A}}}
\def\lbob{\lambda_{\mathsf{B}}}
\def\test{\mathsf{test}}


\begin{document}

\title{Commitment and Oblivious Transfer in the Bounded Storage Model with Errors
\thanks{A conference version of part of this work appeared at the proceedings of ISIT 2014~\cite{ISIT:DowLacNas14}.}
}

\author{Rafael~Dowsley, Felipe Lacerda and Anderson~C.~A.~Nascimento
\thanks{Rafael~Dowsley and Felipe Lacerda are with the Department of Computer Science, Aarhus University, \r{A}bogade 34, 8200, Aarhus N, Denmark. Emails: rafael@cs.au.dk, lacerda@cs.au.dk.}
\thanks{Anderson~C.~A.~Nascimento is with the Institute of Technology, University of Washington Tacoma, 1900 Commerce Street, Tacoma, WA 98402-3100, USA. E-mail: andclay@uw.edu.}
}

\maketitle

\begin{abstract}
The bounded storage model restricts the memory of an adversary in a cryptographic protocol, rather than restricting its computational power, making information theoretically secure protocols feasible. 
We present the first protocols for commitment and oblivious transfer in 
the bounded storage model with errors, i.e., the model where the public random sources 
available to the two parties are not exactly the same, but instead are only required 
to have a small Hamming distance between themselves. Commitment and 
oblivious transfer protocols were known previously only for the error-free variant of the bounded 
storage model, which is harder to realize.  
\end{abstract}

\begin{IEEEkeywords}
Bounded storage model, commitment, oblivious transfer, unconditional security, error correction.
\end{IEEEkeywords}

\IEEEpeerreviewmaketitle

\section{Introduction}

\textit{Commitment schemes} are fundamental building blocks of modern 
cryptography. They are important in the construction of protocols such as
identification protocols~\cite{C:FiaSha86}, contract signing~\cite{EveGolLem:1985},
zero-knowledge proofs~\cite{GolMicWig91}, coin flipping over the
phone~\cite{Blum:1983}, and more generally in two- and multi-party computation
protocols~\cite{STOC:GolMicWig87,C:ChaDamVan87}.
A commitment scheme is a two-stage protocol between two parties, Alice and
Bob. First they execute the \textit{commit stage}, in which Alice chooses a
value $v$ as input and commits to it. Later, they execute the \textit{open
  stage}, in which Alice reveals $v$ to Bob. For the protocol to be secure, it
must satisfy two conditions:
the \textit{hiding property}, which means that Bob cannot learn any
information about $v$ before the open stage, and the \textit{binding property},
which means that after the commit phase, Alice cannot change $v$ without 
that being detected by Bob.

\textit{Oblivious transfer} (OT) is another essential primitive for two- and multi-party computation. It is a two-party protocol in which Alice 
inputs two strings $s_0$ and $s_1$, and Bob inputs a bit $c$. Bob's 
output is the string $s_c$. The protocol is secure if Alice never 
learns the choice bit $c$ and Bob does not learn any information about 
$s_{1-c}$. OT is a fundamental building block for 
multi-party computation and can be used to realize \textit{any} secure 
two-party computation~\cite{STOC:Kilian88,C:IshPraSah08}. 

Unconditionally secure commitment and OT are
impossible in the setting where the parties only communicate through noiseless
channels (even if quantum channels are available~\cite{Mayers:1997}).
However, both of them are possible in the context of 
computational security (in which the adversaries are restricted to be polynomial-time 
Turing machines), as long as computational hardness assumptions are made. 
Commitment can be obtained using generic assumptions such as the
existence of pseudorandom generator~\cite{JC:Naor91} or (more efficiently) 
assuming the hardness of various specific computational 
problems~\cite{C:Even81,Blum:1983,C:Pedersen91}.
OT can be obtained from dense trapdoor 
permutations~\cite{TCC:Haitner04} (which is conjectured to be stronger
than pseudorandom generators) or assuming the hardness of 
many specific computational problems~\cite{Rabin:1981,C:BelMic89,
EC:Kalai05,C:PeiVaiWat08,ICITS:DGMN08,IEICE:DGMN12,CANS:DavDowNas14,BDDMN17}.

Physical assumptions, such
as the existence of noisy channels, enable one to obtain unconditional security for commitment and OT
protocols. In this scenario the problem was studied
from both the theoretical~\cite{EC:DamKilSal99,WinNasIma:2003,ImaMorNas:2006,AhlCsi:2007,NasWin:2008,PDMN:11,DowNas14} 
and the efficient protocol designing~\cite{CreKil:1988,EC:Crepeau97,
SteWol:2002,SCN:CreMorWol04} points of view.

In this paper, we consider a different setting, the so called \textit{bounded storage 
model} (BSM)~\cite{JC:Maurer92a}, in which the adversary is assumed to have bounded memory. 

\subsection{The Bounded Storage Model}

The bounded storage model assumes that both parties have access to
a public random string, and that a dishonest party cannot store the whole
string. This string can be obtained from a natural source, from a trusted third
party, or, in some cases even generated by one of the parties.

A variety of cryptographic tasks can be implemented in the bounded storage
model. Cachin and Maurer~\cite{C:CacMau97} proposed a key agreement
protocol in the bounded storage model in which the parties have a small
pre-shared key, and use it to select bits from a public random source of size
$n$. Key agreement in this setting is always possible if the
pre-shared key has size proportional to $\log n$, as long as the adversary has
bounded memory. They also
proposed a protocol for key agreement by public discussion (that is, without a
pre-shared key) that requires $\sqrt{kn}$ (where $k$ is the key length) samples from the random
source and is thus less practical. Later, Dziembowski and Maurer~\cite{DM08} showed that 
this protocol is optimal, in the sense that one cannot have key
agreement by public discussion in the bounded storage model with less than
$O(\sqrt n)$ samples.

The first OT protocol in the bounded storage model was 
introduced by Cachin et al.~\cite{FOCS:CacCreMar98}. Ding~\cite{C:Ding01} and
Hong et al.~\cite{AC:HonChaRyu02} proposed improvements in a slightly different model. Ding et al.~\cite{TCC:DHRS04} obtained
the first constant-round protocol.

Shikata and Yamanaka~\cite{IMA:ShiYam11} and independently
Alves~\cite{Alves:2010} studied the problem of commitment in the
bounded storage model and provided protocols for it.

Unfortunately the bounded storage model assumes that there
exists a random source that can be reliably broadcast to all parties, without
errors in the transmission, and this is hard to realize in practice.

Consider a scenario where a satellite broadcasts a very large random string to be used in protocols in the bounded storage model. In his Ph.D. thesis, Ding \cite{Ding01} made an analysis of the practicality of this scenario, showing that an antenna with a surface area of $10m^2$ can be used to receive random bits from a geo-stationary satellite at rates up to 50 Gbps.  Ding's analysis did not consider the fact that errors are introduced in the string and that an adversary might be able to jam signal received by a legitimate party.  Our goal with this work is to study two-party protocols under more 
realistic assumptions.

\subsection{Our contribution}

In this work, we consider a more general variant of the BSM, in which errors are introduced in the public random source in arbitrary positions. 
This setting captures the situation in which the source is partially controlled
by an adversary, and also the situation in which there are errors due to 
noise in the channel. It is only assumed that the fraction of errors, relative to
the length of the public string, is not too large. Ding previously studied this model~\cite{TCC:Ding05}  in the context of secret key extension protocols. These protocols can  be modified, at the cost of an efficiency loss, to handle the case of key agreement, when no pre-shared key exists.
 He defined a general paradigm for BSM randomness extraction schemes and also showed how to
incorporate error correction in key agreement extension by using fuzzy
extractors~\cite{EC:DodReySmi04}.   

We give a brief introduction to the model and its notation in order to state our results.  A transmission 
phase is executed prior to the realization of the protocols' main part. In this phase, Alice has access to a sample $x \in \bits^n$ from an 
$\alpha n$-source $X$ (a source with min-entropy at least $\alpha n$), where $0 < \alpha < 1$, and the Bob to $\widetilde{x} \in \bits^n$
such that $\HD(x,\widetilde{x}) \leq \delta n$, where $\HD(\cdot)$ represents the hamming distance. We assume that an adversary (controlling either Alice or Bob) has complete control on where to insert the differences between the strings $x$ and $\widetilde{x}$, thus capturing both the situation where the source is noisy and  the situation where an adversary controls part of the source. 

We propose the first protocols for bit commitment and OT in the BSM with errors, thus extending the results of \cite{TCC:Ding05} to the case of two-party secure protocols. We show that the techniques introduced by \cite{TCC:Ding05} originally in the context of key extension give us efficient protocols for implementing OT.  Our OT protocol assumes a memory bounded Bob (i.e., he is able to store at most $\gamma n$ bits for 
$\gamma < \alpha$), but no limitation is put on Alice's memory. It works based on an efficient linear error correcting code proposed in \cite{guruswami2002near} with rate $\beta$ and achieving the Zyablov bound \cite{zyablov71}. We show that as long as $\beta > 1 -\alpha -\gamma$ the protocol works for noise levels $\delta$ as severe as (approximately) $$\max_{\beta<\widetilde{\beta}<1}\frac{(1-\widetilde{\beta})y}{2},$$ where $y$ is the unique value in $[0,1/2]$ so that $h(y)=1-\beta/\widetilde{\beta}$ and $h(\cdot)$ is the binary entropy function.  In case a random linear error correcting code is used an improved noise level can be tolerated $$h(2\delta) < \alpha - \gamma.$$ This improvement in the resilience comes at the price of making the protocol inefficient from a computational complexity point of view, given the intractability of decoding random linear codes. 

This OT protocol immediately gives us a commitment scheme. However, using OT for obtaining commitment is not a desirable solution. The communication, round and computational complexities of OT protocols are usually much higher than the ones for commitment schemes. Moreover, it could be the case that commitment protocols  work for different ranges of noise $\delta$. 

We propose a direct construction of a commitment protocol that does not rely on the framework proposed by Ding~\cite{TCC:Ding05}, does not use error correcting codes at all, implements \emph{string} commitment and has only one message from Bob to Alice. Again, we assume that Bob has limited memory. No limitations are imposed on Alice whatsoever. The protocol is very efficient and simple and works for $$h(\delta)<\frac{\alpha-\gamma}{2}.$$ We then show that it is possible to obtain a protocol that works for a much larger range of noise  $$h(\delta)<\alpha-\gamma$$ at the cost of having one additional message in each direction and by using a family of $4k$-universal hash functions. Finally, we show that the use of families of $4k$-universal hash functions can be avoided by imposing a memory bound on Alice, instead of Bob.  We note that being able to implement protocols in the memory bounded by bounding any of the parties is an important matter, particularly when one of the parties is much more powerful than the other. This protocol is based on the interactive hashing protocol of \cite{TCC:DHRS04} and also works for $$h(\delta)<\alpha-\gamma,$$ but has extra rounds of communication and implements \emph{bit} rather than string commitment. 

The techniques we use in our results are standard in the field: extractors, error-correcting codes, typicality tests, sampling, etc. However, to the best of our knowledge, this is the first time that these techniques are combined to obtain commitment and OT protocols in the memory bounded model with errors. Moreover, the study of how much adversarial noise can be tolerated in this model and its relation to round complexity is also original, as far as we know.  Interestingly, the noise levels tolerated by our protocols are different for OT and commitment schemes. This contrasts sharply with the noiseless situation where either one has every possible secure two-party computation or nothing at all.

\subsection{Overview}

In \secref{sec:prelim} we present the main tools used in our protocols. \secref{sec:model} explains the security model. 
Our commitment protocols are introduced in \secref{sec:commit} and the oblivious transfer in
\secref{sec:protocol-ot}. A conference version of this work appeared at the proceedings of ISIT 2014~\cite{ISIT:DowLacNas14} and only covered the case of OT. 
In this full version a more detailed presentation of the case of OT is presented and the case of commitment is entirely new; the other sections are also extended accordingly.

\section{Preliminaries}\label{sec:prelim}

We use calligraphic letters for denoting domains of random variables and other sets, upper case letters for random variables and 
lower case letters for realizations of the random variables.
We deal solely with discrete random variables. The probability mass function of a random variable $X$ will be denoted by 
$P_X$. The set $\{1,\dots,n\}$ will be written as $[n]$. If $x = (x_1, \dots, x_n)$ 
is a sequence and $S = \{s_1, \dots, s_t\} \subseteq [n]$, $x^S$ denotes 
the sequence $(x_{s_1}, \dots, x_{s_t})$. $u \getsr U$ denotes 
that $u$ is drawn from the uniform distribution over the set $U$ and
$U_r$ is the uniformly-distributed $r$-bit random variable. $y \getsr \mathcal{F}(x)$
denotes the act of running the probabilistic algorithm $\mathcal{F}$ with
input $x$ and obtaining the output $y$. $y \gets \mathcal{F}(x)$ is 
similarly used for deterministic algorithms.

If $x$ and $y$ are strings, $\HD(x,y)$ denotes their Hamming
distance (that is, the number of positions in which they differ) and $x
\oplus y$ their bitwise exclusive or. Let $\log x$ denote the logarithm 
of $x$ in base 2. The binary entropy
function is denoted by $h$: for $0 \leq x \leq 1$, $h(x) = -x \log x - (1-x) \log (1-x)$. By
convention, $0 \log 0 = 0$. $H(X)$ denotes the entropy of $X$ and
$I(X;Y)$ the mutual information between $X$ and $Y$.

The \textit{statistical distance} is a measure of the distance between two
probability distributions. Here, we state its definition for the case of discrete probabilities. 

\begin{definition}[Statistical distance]
The \textnormal{statistical distance} $\|P_X-P_Y\|$ between two probability
mass functions $P_X,P_Y$ over an alphabet $\mathcal{X}$ is defined as
\begin{displaymath}
\|P_X-P_Y\| = \max_{A \subseteq \mathcal{X}} \left|\sum_{x \in A} P_X(x) - P_Y(x)\right|.
\end{displaymath}
We say $P_X$ and $P_Y$ are $\varepsilon$-close if $\|P_X - P_Y\| \leq \varepsilon$.
\end{definition}


The main entropy measure in this work is the \textit{min-entropy}.

\begin{definition}[Min-entropy]
  Let $P_{XY}$ be a probability mass function over $\mathcal{X} \times
  \mathcal{Y}$. The \textnormal{min-entropy of $X$}, denoted by $H_{\infty}(X)$,
  and the \textnormal{conditional min-entropy of $X$ given $Y$}, denoted by $H_{\infty}(X
  | Y)$, are respectively defined as
  \begin{align*}
    H_{\infty}(X) &= \min_{x \in \mathcal{X}} (-\log P_X(x)) \text{ and} \\
    H_{\infty}(X | Y) &= \min_{y \in \mathcal{Y}} \min_{x \in \mathcal{X}}
    (-\log P_{X | Y=y}(x)).
  \end{align*}

$X$ is called a \textnormal{$k$-source} if $H_\infty(X) \geq k$.
\end{definition}

The conditional min-entropy $H_{\infty}(X | Y)$ measures the extractable
private randomness from the variable $X$, given the correlated random variable $Y$
possessed by an adversary.
The min-entropy has the problem of being sensitive to small changes in
the probability mass function and due to this fact the notion of \textit{smooth} 
min-entropy~\cite{AC:RenWol05} will be used (please refer to the Appendix for a longer discussion about 
the advantages of the smooth variants for cryptographic applications).

\begin{definition}[Smooth min-entropy]
Let $\varepsilon > 0$ and $P_{XY}$ be a probability mass functions. The 
\textnormal{$\varepsilon$-smooth min-entropy of $X$ given $Y$} is 
defined by
\begin{displaymath}
H_\infty^\varepsilon(X|Y)=\max\limits_{X'Y':\|P_{X'Y'}-P_{XY}\|
\leq\varepsilon}H_\infty(X'|Y').
\end{displaymath}
\end{definition}

Intuitively, the smooth min-entropy is the maximum min-entropy in the neighborhood
of the probability mass function. Similarly, we also define the max-entropy and its smooth version.

\begin{definition}[(Smooth) Max-entropy] The max-entropy is defined as 
\[
H_0(X)=\log|\{x\in X|P_{X}(x)>0\}|
\]
and its conditional version is given by
\[
H_0(X|Y)=\max\limits_{y}H_0(X|Y=y).
\]
The smooth variants are defined as
\[
H_0^\varepsilon(X)=\min\limits_{X':\|P_{X'}-P_{X}\|
\leq\varepsilon} H_0(X') \text{ and}
\]
\[
H_0^\varepsilon(X|Y)=\min\limits_{X'Y':\|P_{X'Y'}-P_{XY}\|
\leq\varepsilon} H_0(X'|Y').
\]
\end{definition}

The following inequalities are smooth min-entropy
analogues of the chain rule for conditional Shannon
entropy.

\begin{lemma}[\cite{AC:RenWol05}]
  \label{lem:chain}
  Let $\varepsilon,\varepsilon',\varepsilon'' > 0$ and $P_{XYZ}$ be a tripartite probability
  mass function. Then
  \ifnum\breakequations=1
  	\begin{align*}
    		H_{\infty}^{\varepsilon + \varepsilon'}(X, Y | Z) &\geq
    		H_{\infty}^{\varepsilon}(X | Y, Z) + H_{\infty}^{\varepsilon'}(Y | Z) \text{ and}\\
    		H_{\infty}^{\varepsilon}(X, Y | Z) &<
    		H_{\infty}^{\varepsilon+\varepsilon' + \varepsilon''}(X | Y, Z) +
    		H_0^{\varepsilon''}(Y | Z) \\ 
    		& \qquad + \log(1/\varepsilon'). 
  	\end{align*}
   \else
   	\begin{align*}
    		H_{\infty}^{\varepsilon + \varepsilon'}(X, Y | Z) &\geq
    		H_{\infty}^{\varepsilon}(X | Y, Z) + H_{\infty}^{\varepsilon'}(Y | Z) \text{ and}\\
    		H_{\infty}^{\varepsilon}(X, Y | Z) &<
    		H_{\infty}^{\varepsilon+\varepsilon' + \varepsilon''}(X | Y, Z) +
    		H_0^{\varepsilon''}(Y | Z) 
    		+ \log(1/\varepsilon'). 
  	\end{align*}
   \fi
\end{lemma}

The notion of \textit{min-entropy rate} and a few results regarding its preservation
will be used subsequently.

\begin{definition}[Min-entropy rate]
Let $X$ be a random variable with an alphabet $\mathcal{X}$, $Y$ be an
arbitrary random variable, and $\varepsilon \geq 0$. The
\textnormal{min-entropy rate} $R_{\infty}^\varepsilon(X | Y)$ 
is defined as
\begin{displaymath}
R_{\infty}^\varepsilon(X | Y) = \frac{H_{\infty}^\varepsilon(X | Y)}{\log |\mathcal{X}|}.
\end{displaymath}
\end{definition}

The following lemma follows by using Lemma 3.16 in \cite{JC:DHRS07} as a first step 
and some easy extra steps to obtain an inequality in terms of the min-entropy rate alone. It 
says that a source with high min-entropy also has high min-entropy when
conditioned on a correlated short string. This lemma is what makes the bounded
storage assumption useful: it implies that a memory bounded
adversary has limited information about a string sampled from the public random string.

\begin{lemma}
\label{lem:gr}
Let $X \in \{0,1\}^n$ such that $R_{\infty}^{\varepsilon}(X) \geq \rho$ and $Y$ be 
a random variable over $\{0,1\}^{\phi n}$. Fix $\varepsilon' >0$. Then
\begin{displaymath}
R_{\infty}^{\varepsilon'+\sqrt{8 \varepsilon}}(X | Y) \geq \rho - \phi -
\frac{1 + \log(1/\varepsilon')}{n}.
\end{displaymath}
\end{lemma}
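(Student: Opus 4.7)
The plan is to combine the smooth min-entropy chain rule (the second inequality in \lemref{lem:chain}) with a direct bound on the max-entropy of $Y$. Taking $Z$ to be trivial and $\varepsilon'' = 0$ in that chain-rule inequality yields
\begin{equation*}
H_\infty^{\varepsilon + \varepsilon'}(X \mid Y) > H_\infty^\varepsilon(X,Y) - H_0(Y) - \log(1/\varepsilon').
\end{equation*}
Since $Y$ takes values in $\{0,1\}^{\phi n}$, we immediately have $H_0(Y) \leq \phi n$, so the only remaining ingredient is a transfer of smoothing from the marginal to the joint distribution, namely $H_\infty^\varepsilon(X,Y) \geq H_\infty^\varepsilon(X) \geq \rho n$.

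To establish that transfer I would unfold the definition of smooth min-entropy. Pick any $X^*$ with $\|P_{X^*} - P_X\| \leq \varepsilon$ and $H_\infty(X^*) \geq \rho n$, and extend it to a joint distribution by setting $P_{X^* Y^*}(x,y) := P_{X^*}(x)\,Q(y \mid x)$, where $Q(\cdot \mid x) = P_{Y \mid X}(\cdot \mid x)$ when $x \in \mathrm{supp}(X)$ and $Q(\cdot \mid x)$ is uniform on $\{0,1\}^{\phi n}$ otherwise. A direct computation gives $\|P_{X^* Y^*} - P_{XY}\| = \|P_{X^*} - P_X\| \leq \varepsilon$, while $H_\infty(X^*, Y^*) \geq H_\infty(X^*) \geq \rho n$ because $P_{X^* Y^*}(x,y) \leq P_{X^*}(x)$. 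Thus $(X^*,Y^*)$ witnesses $H_\infty^\varepsilon(X,Y) \geq \rho n$, and chaining with the inequality above yields $H_\infty^{\varepsilon + \varepsilon'}(X \mid Y) > \rho n - \phi n - \log(1/\varepsilon')$. Dividing by $n = \log|\mathcal{X}|$ essentially recovers the claimed rate bound.

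The main obstacle is reconciling the smoothing parameter and the additive constant with those stated: the claimed $\varepsilon' + \sqrt{8\varepsilon}$ is looser than the $\varepsilon + \varepsilon'$ that the clean chain-rule argument gives, and the extra $-1/n$ term is also slack. These arise in the original proof of~\cite{TCC:DHRS04} because their smoothing is measured via a distance whose conversion to total variation costs a factor $\sqrt{8\varepsilon}$ (in the spirit of the Fuchs--van de Graaf inequality), and because the threshold in the underlying bad-set counting argument carries a factor of $2$ of slack that produces the extra ``$-1$''. As a sanity check on the bad-set heuristic: if one sets $\mathsf{Bad} = \{y : P_Y(y) < \varepsilon'\, 2^{-\phi n}\}$, then $\Pr[Y \in \mathsf{Bad}] < \varepsilon'$, and for every $y \notin \mathsf{Bad}$ and every $x$ one has $P_{X^*\mid Y^* = y}(x) \leq P_{X^*}(x)/P_{Y^*}(y) \leq 2^{-\rho n + \phi n + \log(1/\varepsilon')}$, so replacing the joint distribution on $\mathsf{Bad}$ by a uniform-in-$X$ distribution gives an alternative direct derivation that agrees with the chain-rule approach. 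Pinning down these constants under the conventions of~\cite{TCC:DHRS04} is the delicate bookkeeping step, but the qualitative structure of the argument is unaffected.
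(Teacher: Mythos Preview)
Your argument is correct and in fact yields a sharper inequality than the one stated: you obtain
\[
R_\infty^{\varepsilon+\varepsilon'}(X\mid Y) > \rho - \phi - \frac{\log(1/\varepsilon')}{n},
\]
which implies the lemma since $\varepsilon+\varepsilon' \leq \varepsilon'+\sqrt{8\varepsilon}$ and smooth min-entropy is monotone in the smoothing parameter. The paper, however, proceeds differently: it invokes Lemma~3.16 of~\cite{TCC:DHRS04} as a black box to obtain the high-probability statement
\[
\Pr_{y\getsr Y}\bigl[R_\infty^{\sqrt{2\varepsilon}}(X\mid Y=y)\geq \psi\bigr]\geq 1-\varepsilon'-\sqrt{2\varepsilon},
\]
with $\psi=\rho-\phi-(1+\log(1/\varepsilon'))/n$, and then converts this into a smooth conditional min-entropy bound by replacing the conditional distribution $P_{X\mid Y=y}$ with the uniform one on the ``bad'' set of $y$'s. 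The $\sqrt{8\varepsilon}$ and the extra $-1/n$ you flagged are exactly the artifacts of that route, as you suspected. Your chain-rule-plus-coupling argument is more self-contained (it never leaves the smooth-entropy calculus and does not pass through a pointwise high-probability statement), and the coupling step $H_\infty^\varepsilon(X,Y)\geq H_\infty^\varepsilon(X)$ that you spell out is both clean and tight. The paper's route has the advantage of making the ``most $y$ are good'' structure explicit, which is sometimes useful downstream, but for the lemma as stated your approach is simpler and loses nothing.
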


\begin{IEEEproof}
Let $\psi = \rho - \phi - \frac{1+\log(1/\varepsilon')}{n}$. By lemma~3.16 in~\cite{JC:DHRS07} we have that if
$R_{\infty}^{\varepsilon}(X) \geq \rho$ then

\begin{displaymath}
\Pr_{y \getsr Y}\left[R_{\infty}^{\sqrt{2 \varepsilon}}(X | Y = y)
\geq \psi \right] \geq 
1 - \varepsilon' - \sqrt{2 \varepsilon}.
\end{displaymath}

To get the desired result, let $\mathcal{G} = \{y \in \mathcal{Y} | R_{\infty}^{\sqrt{2 \varepsilon}}(X
| Y = y) \geq \psi\}$ and $P_{XY}$ be the joint probability distribution
of $X$ and $Y$. Let $P'_{XY}$ be the distribution that is $\sqrt{2\varepsilon}$-close
to $P_{XY}$ and is such that $P'(X = x | Y = y) \leq 2^{-\psi n}$ for any
$x \in \mathcal{X}, y \in \mathcal{G}$. Let $P''_{XY}$ be obtained by letting 
$P''(X | Y = y)=P'(X | Y = y)$ for $y \in \mathcal{G}$ and
defining $P''(X =x | Y = y) = 2^{-n}$ for any $x \in \mathcal{X}$, $y \notin \mathcal{G}$. 
As $\Prob{\cal G} \geq 1 - \varepsilon' - \sqrt{2 \varepsilon}$, it holds that
$\|P''_{XY} - P'_{XY}\| \leq \varepsilon' + \sqrt{2 \varepsilon}$ and
so $\|P''_{XY} - P_{XY}\| \leq \varepsilon' + 2\sqrt{2 \varepsilon}$.
Since $P''(X=x | Y=y) \leq 2^{-\psi n}$ for every $x \in \mathcal{X}, 
y \in \mathcal{Y}$, the lemma follows.
\end{IEEEproof}

\subsection{Averaging Samplers and Randomness Extractors}\label{sec:ext}

The sample-then-extract paradigm is usually employed  in the bounded storage model - first some 
positions of the source are sampled and then an extractor is applied on
these positions. Note that due to the assumption that it is infeasible to
store the whole source string (the memory bound), it is not possible to apply an extractor 
to the complete string at once, the extractor needs to be locally computable~\cite{JC:Vadhan04}.
In this context, \textit{averaging samplers}~\cite{FOCS:BelRom94,
IPL:CanEveGol95, Zuckerman97} are a fundamental 
tool. Intuitively, averaging samplers produce samples such that
the average value of any function applied to the sampled string is roughly the
same as the average when taken over the original string.

\begin{definition}[Averaging sampler]
A function $\Samp\Colon \allowbreak \bits^r \to [n]^t$ is an
\textnormal{$(\mu,\nu,\varepsilon)$-averaging sampler} if for every function
$f\Colon [n] \to [0,1]$ with average $\frac{\sum_{i=1}^n f(i)}{n} \geq \mu$ it holds that
\begin{equation*}\label{eq:avgdef}
\Probsub{\mathcal{S} \getsr \Samp(U_r)}{\frac{1}{t} \sum_{i \in \mathcal{S}} f(i) \leq \mu - \nu} \leq \varepsilon.
\end{equation*}
\end{definition}

Averaging samplers enjoy several useful properties. Particularly important to this 
work is the fact that averaging samplers roughly preserve the \textit{min-entropy
rate}.

\begin{lemma}[\cite{JC:Vadhan04}]\label{lem:vadhan_minentropy}
Let $X \in \{0,1\}^n$ be such that $R_{\infty}(X | Y) \geq \rho$.
Let $\tau$ be such that $1\geq \rho \geq 3\tau>0$ and 
$\Samp \Colon \bits^r \to [n]^t$ be an
$(\mu,\nu,\varepsilon)$-averaging sampler with distinct samples for 
$\mu = (\rho - 2\tau)/\log(1/\tau)$ and $\nu = \tau/\log(1/\tau)$. Then
for $\mathcal{S} \getsr \Samp(U_r)$
\begin{displaymath}
R_{\infty}^{\varepsilon'}(X^{\mathcal{S}} | \mathcal{S}, Y) \geq \rho -
3\tau,
\end{displaymath}
where $\varepsilon' = \varepsilon + 2^{-\Omega(\tau n)}$.
\end{lemma}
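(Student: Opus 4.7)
The plan is to invoke the sample--then--extract machinery underlying Vadhan's result. First, I would fix an arbitrary $e$ in the support of $E$. By the hypothesis $R_{\infty}(X \mid E) \geq \rho$, conditioning on $E = e$ yields a distribution $P_e = P_{X \mid E=e}$ with min-entropy at least $\rho n$. The desired bound on the smooth conditional min-entropy of $X^{\mathcal{S}}$ given $(\mathcal{S}, E)$ will follow if I prove it for $E = e$ for a $(1 - \text{negligible})$-fraction of $e$'s, and then integrate the failure event into the smoothness parameter $\varepsilon'$.

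The heart of the argument is the claim that for each fixed $z \in \{0,1\}^t$, with probability at least $1 - 2^{-\Omega(\tau n)}$ over $\mathcal{S} \getsr \Samp(U_r)$, one has $P_e(X^{\mathcal{S}} = z) \leq 2^{-(\rho - 3\tau)t}$. To prove this claim, I would slice the support of $P_e$ into dyadic weight classes
\[
B_j \;=\; \{x \in \{0,1\}^n : 2^{-(j+1)} \leq P_e(x) < 2^{-j}\}, \qquad j \in [\rho n, n],
\]
each of which behaves like a flat distribution of known size $|B_j| \leq 2^{j+1}$. For each $B_j$, I would apply the averaging sampler to the function $f_{B_j}\Colon [n] \to [0,1]$ that measures, for each position $i$, the fraction of $x \in B_j$ with $x_i = z_i$, appropriately normalized. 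The $(\mu,\nu,\varepsilon)$-averaging property then controls, on average over $\mathcal{S}$, how many $x \in B_j$ are consistent with $X^{\mathcal{S}} = z$, and summing over the classes $B_j$ produces the claimed upper bound on $P_e(X^{\mathcal{S}} = z)$.

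I expect the main obstacle to be the union bound over $z \in \{0,1\}^t$ together with the union over the $B_j$ layers: the failure probability per $z$ must remain so small that multiplying by $2^t$ and by $n$ (the number of layers) still leaves $2^{-\Omega(\tau n)}$ slack. The specific choice $\mu = (\rho - 2\tau)/\log(1/\tau)$ and $\nu = \tau/\log(1/\tau)$ in the statement is precisely what buys this slack, because it converts the per-layer additive error into a multiplicative factor of $1/\log(1/\tau)$, which the $\log(1/\tau)$ normalization absorbs. A secondary technicality is the smooth min-entropy step at the end: the set of pairs $(\mathcal{S}, e)$ on which the sampler ``fails'' must be absorbed into an $\varepsilon'$-close distribution by redefining $X^{\mathcal{S}}$ to be uniform there, analogously to the construction of $P''_{XY}$ in the proof of the preceding lemma. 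Once these steps are carried out, combining the per-$e$ bound with the averaging over $E$ and the slack $\varepsilon' = \varepsilon + 2^{-\Omega(\tau n)}$ completes the argument.
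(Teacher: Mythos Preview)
The paper does not prove this lemma; it is quoted with attribution to Vadhan and used as a black box, so there is no in-paper argument to compare against. Evaluating your sketch on its own merits, there is a genuine structural gap.

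The problem is the union bound over $z \in \{0,1\}^t$. You apply the averaging sampler to a function $f_{B_j}$ that depends on the target string $z$, and then union-bound over all $2^t$ choices of $z$ (and the $O(n)$ dyadic layers). Each invocation of the sampler fails with probability $\varepsilon$ --- this is a parameter of the abstract sampler in the hypothesis, not something the choice of $\mu$ and $\nu$ can shrink --- so the union bound yields a total failure probability of order $n \cdot 2^t \cdot \varepsilon$. But the lemma asserts $\varepsilon' = \varepsilon + 2^{-\Omega(\tau n)}$, with the sampler's $\varepsilon$ entering \emph{once}. Your route therefore cannot recover the stated bound for a general $(\mu,\nu,\varepsilon)$-sampler. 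There is also a secondary issue: it is unclear how the sample average of your $f_{B_j}$, which measures per-coordinate agreement with $z$, would control the count $|\{x \in B_j : x^{\mathcal{S}} = z\}|$, which requires agreement on \emph{all} sampled coordinates simultaneously.

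Vadhan's actual argument, following Nisan--Zuckerman, avoids any enumeration over $z$. After writing the source as $2^{-\Omega(\tau n)}$-close to a convex combination of flat sources (this is where the additive $2^{-\Omega(\tau n)}$ term comes from), one applies the sampler \emph{once} per flat component, to a single $z$-independent function $f \colon [n] \to [0,1]$ encoding the per-coordinate conditional information of the flat source --- essentially a truncated version of $H(X_i \mid X_1,\dots,X_{i-1})$. The sampler guarantee says that with probability $1-\varepsilon$ over $\mathcal{S}$ the sum $\sum_{i \in \mathcal{S}} f(i)$ is large, and the chain rule converts this directly into $H_\infty(X^{\mathcal{S}}) \geq (\rho-3\tau)t$. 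The calibration $\mu = (\rho-2\tau)/\log(1/\tau)$, $\nu = \tau/\log(1/\tau)$ is what makes the truncation loss and the chain-rule loss add up to exactly $3\tau$; it does not, and cannot, buy slack against a $2^t$-fold union bound.
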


For $t < n$, the uniform distribution over subsets of $[n]$ of size $t$ is an
averaging sampler, also called the $(n,t)$-\textit{random subset sampler}.

\begin{lemma}\label{lem:random-subset}
Let $0 < t< n$. For any $\mu,\nu > 0$, the $(n,t)$-random subset sampler is a 
$(\mu,\nu,e^{-t \nu^2/2})$-averaging sampler.
\end{lemma}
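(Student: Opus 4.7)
The plan is to derive the bound directly from Hoeffding's inequality, using the classical fact that tail bounds for sampling without replacement from a finite population are no worse than those for i.i.d.\ sampling with replacement. Let $\mathcal{S} = \{X_1, \dots, X_t\}$ be the subset chosen by the $(n,t)$-random subset sampler, and write $\bar f = \frac{1}{n}\sum_{i=1}^n f(i)$. By hypothesis $\bar f \geq \mu$, and a direct calculation (by symmetry of the uniform random subset) gives $\mathbb{E}\!\left[\frac{1}{t}\sum_{i\in\mathcal{S}}f(i)\right] = \bar f$. First I would observe that the event in \eqref{eq:avgdef} is contained in the event $\frac{1}{t}\sum_{i\in\mathcal{S}}f(i) - \bar f \leq -\nu$, since $\bar f \geq \mu$ implies the lower tail becomes at most harder to hit.

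Next I would invoke Hoeffding's classical reduction (Hoeffding, 1963): for any convex function $\phi$, the expectation $\mathbb{E}[\phi(Y_1 + \cdots + Y_t)]$ for $Y_i = f(X_i)$ obtained by sampling without replacement from the multiset $\{f(1),\dots,f(n)\}$ is at most the same expectation when the $Y_i$ are i.i.d.\ uniform draws from the multiset (sampling with replacement). Applied to $\phi(s) = e^{-\lambda s}$, this means the moment generating function bound used in the Chernoff/Hoeffding argument carries over verbatim from the with-replacement case.

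In the with-replacement setting, since $f(i) \in [0,1]$, the standard Hoeffding inequality yields
\begin{displaymath}
\Pr\!\left[\tfrac{1}{t}\textstyle\sum_{i=1}^t Y_i - \bar f \leq -\nu\right] \leq e^{-2 t \nu^2},
\end{displaymath}
and by the reduction above the same bound holds for $\mathcal{S}$ drawn by the $(n,t)$-random subset sampler. Since $e^{-2t\nu^2} \leq e^{-t\nu^2/2}$, the claim follows.

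There is no real obstacle beyond citing or invoking the known concentration machinery; the only delicate point is justifying Hoeffding's passage from without-replacement to with-replacement sampling, which I would handle by quoting Hoeffding's 1963 paper. The slight looseness in the exponent ($2t\nu^2$ vs.\ $t\nu^2/2$) is intentional and gives the weaker but cleaner form used elsewhere in the paper.
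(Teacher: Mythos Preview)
Your argument is correct. The paper does not supply its own proof but simply cites Lemma~5.5 of Babai--Hayes~\cite{BabHay:2005}, which is itself a Hoeffding-type tail bound for uniformly random subsets; your direct derivation via Hoeffding's 1963 reduction from sampling without replacement to sampling with replacement is exactly the standard argument underlying that cited result, and in fact yields the sharper exponent $2t\nu^2$ before you relax it to $t\nu^2/2$.
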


\begin{IEEEproof}
It is just a restatement of Lemma~5.5 in~\cite{BabHay:2005}.
\end{IEEEproof}

A \textit{randomness extractor} is a function that takes a string with high
min-entropy as an input and outputs a string that is close (in the statistical
distance sense) to a uniformly distributed string.

\begin{definition}[Strong extractor]
A function $\Ext \Colon \{0,1\}^n \times \{0,1\}^r \to \{0,1\}^m$ is a 
\textnormal{$(k,\varepsilon)$-strong extractor} if for every $k$-source
$X$, we have
\begin{displaymath}
\|P_{\Ext(X,U_r),U_r} - P_{U_m,U_r}\| \leq \varepsilon.
\end{displaymath}
\end{definition}

The following lemma specifies the parameters of an explicit strong extractor 
construction~\cite{Zuckerman97}.

\begin{lemma}[\cite{Zuckerman97}]\label{lem:ext}
Let $\rho, \psi>0$ be arbitrary constants. For every $n \in \N$ and every 
$\varepsilon>e^{-n/2^{O(\log^* n)}}$, there is an explicit construction of a
$(\rho n, \varepsilon)$-strong extractor $\Ext \Colon \{0,1\}^n \times \{0,1\}^r \to \{0,1\}^m$
with $m= (1-\psi)\rho n$ and $r=O(\log n + \log (1/\varepsilon))$.
\end{lemma}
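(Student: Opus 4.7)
The plan is simply to invoke Zuckerman's explicit construction from \cite{Zuckerman97}, since the parameters stated here are exactly those achieved in that work. Even so, it is worth outlining the main ideas of the argument one would carry out. The starting point is the Leftover Hash Lemma, which via a family of pairwise independent hash functions yields a $(\rho n,\varepsilon)$-strong extractor with output length $m=\rho n-2\log(1/\varepsilon)$ but seed length $r=n$. This trivial construction is essentially optimal in output length but uses far too many random bits for our purposes, so the whole point is to drive the seed length down to $O(\log n+\log(1/\varepsilon))$ while losing only a $\psi$ fraction of the extractable min-entropy.

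The key tool for shrinking the seed is block-source extraction. First I would split the source $X\in\{0,1\}^n$ into a small number of blocks and argue, via a chain-rule style argument like \lemref{lem:chain}, that conditioning on the content of some blocks leaves the remaining ones with high min-entropy. One then uses a short random seed together with an averaging sampler (as produced by \lemref{lem:avg-sampler-construction}) to pull out a shorter ``somewhat random'' string from one block, and recycles that string as the seed for a base extractor applied to a different block. This trades a small amount of output length for a dramatic reduction in seed length.

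The main obstacle is the parameter balancing. To achieve $m=(1-\psi)\rho n$ while keeping $r=O(\log n+\log(1/\varepsilon))$, the block-source reduction must be iterated enough times to shrink the seed to within a constant factor of $\log n+\log(1/\varepsilon)$, and the min-entropy losses coming from the chain rule and from each extraction step must accumulate to no more than $\psi\rho n$. This is also where the $O(\log^* n)$ factor in the hypothesis $\varepsilon>e^{-n/2^{O(\log^* n)}}$ originates: roughly $\log^* n$ iterations are required, and the error tolerance degrades by a constant factor at each step. Since all of these technical pieces are carried out in full detail in \cite{Zuckerman97}, my ``proof'' would amount to stating the relevant theorem there and matching the parameters, rather than reconstructing the argument from scratch.
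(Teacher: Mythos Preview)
Your proposal is correct and matches the paper's approach: the paper does not prove this lemma at all, simply citing it as a known result from \cite{Zuckerman97}. Your sketch of the block-source extraction ideas behind Zuckerman's construction actually goes beyond what the paper provides, but your bottom line---invoke the theorem and match parameters---is exactly what the paper does.
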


The OT protocol presented in this work uses a variant of a strong extractor
called a \textit{fuzzy extractor}~\cite{EC:DodReySmi04}. Intuitively, fuzzy
extractors are noise-resilient extractors, that is, extractors such that the
extracted string can be reproduced by any party with a string that is close (in
the Hamming distance sense) to the original source.

\begin{definition}[Fuzzy extractor]
A pair of functions $\Ext \Colon \{0,1\}^n \times \{0,1\}^r \to \{0,1\}^m
\times \{0,1\}^q$, $\Rec \Colon \{0,1\}^n \times \{0,1\}^r \times \{0,1\}^q \to \{0,1\}^m$ is
an \textnormal{$(k, \varepsilon, \delta, \beta)$-fuzzy extractor} if:

\begin{itemize}
\item  For every $\kappa$-source $X \in \{0,1\}^\ell$, 
$(Y,Q) \gets \Ext(X,U_r)$. Then $\|P_{Y U_r Q} - U_m \times P_{U_r Q}\| \leq \varepsilon$.

\item For every $x, x' \in \{0,1\}^\ell$ such that $\HD(x, x') \leq \delta \ell$, 
let $r \getsr U_r$, $(y,q) \gets \Ext(x,r)$. Then it should hold that 
$\Prob{\Rec(x',r,q) = y} \geq 1-\beta$.
\end{itemize}
\end{definition}

Fuzzy extractors are a special case of \textit{one-way key-agreement
schemes}~\cite{C:HolRen05,EC:KanRey09}. Ultimately they are
equivalent to performing information reconciliation followed by privacy
amplification~\cite{EC:RenWol04}. Since there is a restriction to 
close strings with respect to the Hamming distance, syndrome-based fuzzy
extractors can be used, as summarized in the following lemma
from Ding~\cite{TCC:Ding05}.

\begin{lemma}[\cite{TCC:Ding05}]\label{lem:fuzzy-extractor-construction}
Let $1 \geq \rho, \psi>0$ and $1/4 > \delta > 0$ be arbitrary constants. There is a
constant $\beta$, depending on $\delta$, such that for every sufficiently large $n
\in \N$, and every $\varepsilon>e^{-n/2^{O(\log^* n)}}$, there exists an explicit
construction of a $(\rho n, \varepsilon, \delta, 0)$-fuzzy extractor $(\Ext, \Rec)$, where
$\Ext$ is of the form $\Ext \Colon \{0,1\}^n \times \{0,1\}^r \to \{0,1\}^m \times \{0,1\}^p$
with
\begin{align*}
  m &= (1-\psi)\rho n, \\
 r &= O\left(\log n + \log \frac{1}{\varepsilon}\right), \\
 p &\leq \frac{1-\beta}{(1-\psi) \rho} m.
\end{align*}
\end{lemma}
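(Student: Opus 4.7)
The plan is to follow the standard ``secure sketch plus strong extractor'' paradigm for fuzzy extractors, specialized to binary strings under Hamming distance. The coding piece is handled by an efficiently decodable linear code, and the extraction piece is handled by the explicit strong extractor from the preceding lemma.

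First I would pick, for the given $\delta < 1/4$, a binary linear code $C$ of length $n$ and rate $R = R(\delta)$ that admits an efficient syndrome-decoder correcting up to $\delta n$ errors. Such codes exist with some constant rate $R > 0$ via concatenated constructions (e.g., the Zyablov bound / Forney-style concatenation), and the rate depends only on $\delta$, matching the ``$R$ depending on $\delta$'' clause. Write $\text{syn}_C\colon \{0,1\}^n \to \{0,1\}^{(1-R)n}$ for the syndrome map. Then I would define $\Ext(X, (R_1, R_2)) = (Y, P)$ where $P = \text{syn}_C(X)$ is the public sketch and $Y = \Ext'(X, R_2)$ is the output of the strong extractor from the preceding lemma, with seed $R_2$; and define $\Rec(X', (R_1, R_2), P)$ by computing $\text{syn}_C(X') \oplus P = \text{syn}_C(X \oplus X')$, syndrome-decoding to recover the error vector $e = X \oplus X'$ (valid since its Hamming weight is at most $\delta n$), setting $X = X' \oplus e$, and returning $\Ext'(X, R_2)$.

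The correctness (Recovery) property is immediate from the error-correction guarantee of $C$ and the determinism of $\Ext'$. For the security property, I would apply \lemref{lem:gr} to the public syndrome: since $P$ is a string of length $(1-R)n$ (so $\phi = 1-R$), conditioning on $P$ decreases the smooth min-entropy rate of $X$ by at most $(1-R)$ up to a negligible $(1 + \log(1/\varepsilon'))/n$ term. Consequently $X$ retains smooth min-entropy rate at least $\rho - (1-R) - o(1)$ given $P$, and invoking the $(kn, \varepsilon)$-strong extractor from the previous lemma with $k = \rho - (1-R) - o(1)$ yields an output of length $(1-\psi)\rho n$ that is close to uniform given $(P, R_2)$. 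The seed length bound $r = O(\log n + \log(1/\varepsilon))$ is inherited directly from that extractor, while the sketch bound $p \leq (1-R)n$ is exactly the syndrome length, which matches the stated form after substituting the parameters $\alpha$ and $\nu$ absorbed into $\rho$ and $\psi$.

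The main obstacle is purely coding-theoretic: producing an explicit binary linear code of length $n$, constant rate $R(\delta)$, and an efficient syndrome-decoder for any fraction of errors up to $\delta$ bounded away from $1/4$. This is well-known via concatenated codes, and I would cite it as a black box rather than prove it. A secondary subtlety is that the smooth-min-entropy chain-rule application introduces small additive slack (through $\varepsilon'$ and the $(1+\log(1/\varepsilon'))/n$ term), which must be absorbed into the constants $\psi$ and $\varepsilon$; since $\psi > 0$ is an arbitrary constant, there is enough room to do this as $n$ grows, yielding the bounds on $m$, $r$, and $p$ stated in the lemma.
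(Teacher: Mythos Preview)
The paper does not prove this lemma; it is quoted from Ding~\cite{TCC:Ding05}, and the construction there is precisely the syndrome-based ``secure sketch plus strong extractor'' scheme you outline. So your overall approach is the correct one and matches the intended construction.

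There is, however, a real parameter gap in your argument. After conditioning on the syndrome $P$ of length $(1-R)n$, the residual smooth min-entropy of $X$ is only about $(\rho - (1-R))n$, not $\rho n$. Applying the strong extractor of the preceding lemma to this source yields at most $(1-\psi')(\rho - (1-R))n$ output bits for an arbitrary constant $\psi' > 0$, which is strictly less than the claimed $m = (1-\psi)\rho n$ unless $\psi \rho \geq 1-R$. Your final paragraph tries to absorb the loss into $\psi$, but the slack $1-R$ is a fixed constant determined by $\delta$, not an $o(1)$ term, so it cannot be swallowed by an \emph{arbitrary} $\psi > 0$ as you assert. In effect the lemma as printed in the paper (note the undefined symbols $\nu$ and $\alpha$ in the bound on $p$) is stated somewhat loosely; the implicit constraint is $\rho + R > 1$, which the paper itself explicitly imposes later when instantiating the OT protocol. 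Your construction, recovery argument, and seed/sketch length bounds are all sound; the only issue is the jump from residual entropy $(\rho - (1-R))n$ to output length $(1-\psi)\rho n$, and that discrepancy is more plausibly a transcription artifact in the lemma statement than a flaw in your method.
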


\begin{rem}
  \label{rem:fuzzy-extractor}
  The parameters $\beta,\delta$ refer to the error-correcting code used in the
  construction, specifically, a code of size $n$ with rate $\beta$ that can correct
  $\delta n$ errors. It is known~\cite{varshamov1957estimate} that, for a given
  $\upsilon$ with $0 < \upsilon < 1/2$ and $0 \leq \mu \leq 1
  - h(\upsilon)$, there exists a random linear code with minimum distance $\upsilon
  n$ and $\beta \geq 1 - h(\upsilon) - \mu$ (i.e., it matches the
  Gilbert-Varshamov bound). However this construction has no known efficient decoding. We can instead use the concatenated solution in Theorem 4
  of~\cite{guruswami2002near}, which achieves the Zyablov bound \cite{zyablov71}. The
  construction provides a code with linear-time encoding and decoding such that,
  for a given $0 < \beta < 1$ and $\mu > 0$, can correct $\delta n$ errors, where
  \begin{equation*}
    \label{eq:1}
    \delta \geq \max_{\beta < \widetilde{\beta} < 1} \frac{(1 - \widetilde{\beta} - \mu) y}{2}
  \end{equation*}
  and $y$ is the unique number in $[0,1/2]$ with $h(y) = 1-\beta/\widetilde{\beta}$.
\end{rem}

\subsection{Interactive Hashing and Binary Encoding of Subsets}\label{sec:ih}

Interactive hashing was initially introduced in the context of computationally secure cryptography \cite{OstVenYun:1993}, but was later
generalized to the information-theoretic setting, and is particularly useful in
the context of designing oblivious transfer \cite{FOCS:CacCreMar98,TCC:DHRS04,EC:CreSav06,Savvides:2007,PDMN:11} and commitment
protocols \cite{IMA:ShiYam11} with unconditional security.
In this primitive Bob inputs a string $w\in\bits^m$ and both 
Alice and Bob receive as output two strings $w_0,w_1\in\bits^m$ such that $w_0 \neq w_1$. The correctness requirement 
is that one of the two output strings, $w_d$, should be equal to $w$. The security guarantee for Alice is that one of the strings should be 
effectively beyond the control of (a malicious) Bob. On the other hand, the security guarantee for Bob states that (a malicious) Alice should
not be able to learn $d$.

A variety of protocols for realizing interactive hashing have been
proposed~\cite{FOCS:CacCreMar98,TCC:DHRS04,JC:NOVY98}. In this
work interactive hashing is used as a black box since the security of our 
protocols does not depend on particular features of the
interactive hashing protocol used, but only on its security properties.

\begin{definition}[Interactive hashing]
\textit{Interactive hashing} is a protocol between Alice and Bob
in which only Bob has an input $w \in \{0,1\}^m$, and both
parties output $w_0, w_1 \in \{0,1\}^m$ such that
$w_d = w$ for some $d \in \{0,1\}$. The protocol is called an
\textnormal{$\eta$-uniform $(t,\theta)$-secure interactive hashing protocol} if:

\begin{enumerate}
\item If both parties are honest, then the random variable $W_{1-d}$ is close to completely
random, i.e., $W_{1-d}$ is $\eta$-close to the uniform distribution on 
the $2^m - 1$ strings different from $w_d$.

\item Alice's view (possibly a malicious Alice) of the protocol is independent of $d$. Let  $\viewalice{\IH}{W}$ be Alice's view of the protocol 
when the input is the random variable $W$. Then
$$\left\{\viewalice{\IH}{W} | W = W_0 \right\} = \left\{ \viewalice{\IH}{W} | W = W_1 \right\}.$$

\item For any $\calT \subset \{0,1\}^m$ such that $|\calT| \leq 2^t$, it should hold that 
after the protocol execution between an honest Alice and a possibly malicious Bob, $$\Prob{W_0,W_1 \in \calT} \leq \theta,$$ where the probability
is over the parties' randomness.
\end{enumerate}
\end{definition}

By allowing $W_{1-d}$ to be distributed only closely to uniform, this definition
is weaker than the one usually given in the
literature~\cite{Savvides:2007}. It is, however, enough to prove
security of our oblivious transfer protocol. This more general definition allows
for the possibility of using the constant-round protocol of Ding et al.~\cite{TCC:DHRS04}
for interactive hashing. 

\begin{lemma}[\cite{TCC:DHRS04}]
\label{lem:interactive-hashing-ding}
Let $t, m$ be positive integers such that $t \geq \log m + 2$. Then there
exists a four-message $(2^{-m})$-uniform $(t,2^{-(m-t)+O(\log m)})$-secure 
interactive hashing protocol.
\end{lemma}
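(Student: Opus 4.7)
The plan is to exhibit an explicit four-message construction and verify the three properties in the definition of interactive hashing. The natural primitive is a uniformly random surjective linear map $A\colon \{0,1\}^m \to \{0,1\}^{m-1}$, since every image then has exactly two preimages, making the output pair $(W_0, W_1)$ well-defined once Bob reveals $AW$. In skeleton form, Bob first sends a short commitment to some auxiliary randomness; Alice then sends the compression map $A$; Bob sends $AW$; and Alice sends a final message that, together with the earlier commitment, lets both parties agree on the lexicographically sorted pair $\{W_0, W_1\}$. The precise scheduling of these four messages, and the specific hash-function family used for $A$, is what distinguishes the DHRS04 constant-round construction from the simple two-message random-linear-hash version, and is what permits the sharper security bound below.

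Property~1, the independence of $d$ from Alice's view, is immediate: any strategy for Alice only observes $A$ and the image $AW$, and these two quantities are identical whether $W$ equals $W_0$ or $W_1$. For the uniformity property~2, conditioned on $A$ being surjective the kernel $\ker A = \{0, v\}$ for a unique nonzero $v$, so $W_{1-d} = W \oplus v$; averaging over the uniformly random nonzero $v$ gives the uniform distribution on $\{0,1\}^m \setminus \{W\}$. The residual nonuniformity comes entirely from the event that $A$ fails to be surjective, which occurs with probability at most $2^{-m+1}$, giving the required $2^{-m}$-closeness.

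The main obstacle, and where essentially all of the work lies, is the binding property~3 against a malicious Bob. For a target set $T$ with $|T| \leq 2^t$, the bad event $\{W_0, W_1\} \subseteq T$ occurs exactly when two distinct elements of $T$ collide under $A$. A naive union bound over the $\binom{|T|}{2} \leq 2^{2t-1}$ pairs and the per-pair collision probability $\Theta(2^{-(m-1)})$ only yields roughly $2^{-(m-2t)}$, which is weaker than the target $2^{-(m-t)+O(\log m)}$. To close this gap one must exploit the full four-message structure: the intermediate messages force Bob to commit to a restricted set of candidate inputs before $A$ is drawn, and a Markov-style argument that conditions on Bob's state after each round shows that the effective size of the set Bob can steer both outputs into is $2^{t+O(\log m)}$ rather than $2^{2t}$. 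Combining this with the hypothesis $t \geq \log m + 2$ (which is what makes the $O(\log m)$ slack meaningful and makes the associated concentration bounds go through) yields the claimed $2^{-(m-t)+O(\log m)}$ bound. Once the three properties are established, the lemma follows by a routine union bound over their failure events.
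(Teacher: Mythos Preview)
The paper does not prove this lemma at all: it is imported verbatim from \cite{TCC:DHRS04} and used as a black box. So there is no ``paper's own proof'' to compare against, and your task was effectively to reconstruct the DHRS04 argument from scratch.

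Your sketch does not do this. Two concrete problems:

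\textbf{The protocol is not actually specified.} You describe Bob's first message as ``a short commitment to some auxiliary randomness'' and Alice's fourth message as something that ``lets both parties agree on the lexicographically sorted pair.'' But once $A$ and $AW$ are public, both parties can already compute the two preimages and sort them; no first or fourth message is needed for that. As written, your protocol collapses to the two-message random-linear-hash protocol, and the extra two messages are doing no visible work. The DHRS04 construction is not this; the four rounds are there for a specific reason that your description does not capture.

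\textbf{The binding argument is asserted, not proved.} You correctly observe that the naive union bound gives only $2^{-(m-2t)}$ and that the target is $2^{-(m-t)+O(\log m)}$. But the sentence ``a Markov-style argument that conditions on Bob's state after each round shows that the effective size of the set Bob can steer both outputs into is $2^{t+O(\log m)}$ rather than $2^{2t}$'' is the entire content of the lemma, and you have not supplied it. In the protocol you wrote down, Bob's first message is sent \emph{before} $A$ and commits only to ``auxiliary randomness,'' not to $W$; so nothing in your transcript restricts Bob's choice of $W$ after he sees $A$, and there is no mechanism by which the effective target size drops from $2^{2t}$ to $2^{t+O(\log m)}$. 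Closing the gap between $2^{-(m-2t)}$ and $2^{-(m-t)}$ is exactly the nontrivial contribution of \cite{TCC:DHRS04}, and it requires both a carefully chosen hash family (not just a uniformly random linear surjection) and a genuine combinatorial argument, neither of which appears here.

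If you want to prove this lemma rather than cite it, you need to go back to \cite{TCC:DHRS04} and reproduce their actual construction and analysis; the present sketch does not constitute a proof.
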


The following lemma is a result by~\cite{JC:NOVY98}. It is $0$-uniform
(that is, $W_{1-d}$ is distributed uniformly), and achieves near-optimal
security~\cite{Savvides:2007}, but has the disadvantage of taking
$m-1$ rounds to execute.

\begin{lemma}[\cite{JC:NOVY98}]
\label{lem:interactive-hashing-ccm}
There exists a $0$-uniform $(t,a \cdot 2^{-(m-t)})$-secure 
interactive hashing protocol for some constant $a > 0$.
\end{lemma}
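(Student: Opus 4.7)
My plan is to present the NOVY protocol explicitly and then verify the two properties separately.

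\textbf{The protocol.} Alice and Bob perform $m-1$ rounds. In round $i$, Alice picks a uniformly random nonzero vector $q_i \in \{0,1\}^m$ subject to $q_1,\dots,q_i$ being linearly independent over $\mathrm{GF}(2)$, and sends $q_i$ to Bob; Bob responds with the bit $c_i = \langle q_i, W \rangle \bmod 2$. After $m-1$ rounds, the affine subspace $\{x \in \{0,1\}^m : \langle q_i, x\rangle = c_i \text{ for all } i\}$ has dimension $1$, hence contains exactly two strings; call them $W_0 < W_1$ in lexicographic order. Both parties output $(W_0,W_1)$.

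\textbf{$0$-uniformity.} Condition on Bob honestly using input $W = W_d$. The other output $W_{1-d}$ is determined by $W$ and the sequence $(q_1,\dots,q_{m-1})$: it is the unique element of $W + \ker\{q_1,\dots,q_{m-1}\} \setminus \{W\}$. The map $(q_1,\dots,q_{m-1}) \mapsto W_{1-d}$ is a surjection onto $\{0,1\}^m \setminus \{W\}$, and by the symmetry of Alice's distribution over linearly independent tuples (every ordered basis of every hyperplane through $W$ is equally likely), the preimages of all $2^m-1$ strings have equal size. Hence $W_{1-d}$ is uniform on $\{0,1\}^m \setminus \{W_d\}$, and moreover Alice's view $(q_1,\dots,q_{m-1},c_1,\dots,c_{m-1})$ has a distribution that depends only on the set $\{W_0,W_1\}$ and not on which one equals $W$, giving property~1 as well.

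\textbf{Security against cheating Bob.} Fix $T \subseteq \{0,1\}^m$ with $|T| \le 2^t$. Let $T_i$ be the subset of $T$ still consistent with $(q_1,c_1),\dots,(q_i,c_i)$ after round $i$, where the $c_i$ are whatever Bob sends. The event $\{W_0,W_1 \in T\}$ implies $|T_{m-1}| \ge 2$. The key lemma is that in each round, Alice's fresh random $q_i$ (independent of Bob's past messages) splits $T_{i-1}$ roughly in half on average: writing $T_{i-1}^0 = \{x \in T_{i-1}: \langle q_i,x\rangle = 0\}$ and similarly $T_{i-1}^1$, a standard pairwise-independence / Fourier argument on the uniform distribution over nonzero $q_i$ (further conditioned on linear independence, which costs only a constant factor) shows that $\mathbb{E}[\max(|T_{i-1}^0|,|T_{i-1}^1|)] \le \tfrac{1}{2}|T_{i-1}| + O(\sqrt{|T_{i-1}|})$. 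Since Bob must choose $c_i$ before seeing the next round and his best strategy is to pick the larger side, iterating this recursion from $|T_0| \le 2^t$ down through $m-1$ rounds yields $\mathbb{E}[|T_{m-1}|] \le a' \cdot 2^t / 2^{m-1}$ for a universal constant $a'$, and then Markov's inequality applied to $|T_{m-1}| \ge 2$ gives the desired bound $\Pr[W_0,W_1 \in T] \le a \cdot 2^{-(m-t)}$.

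\textbf{Main obstacle.} The delicate part is the per-round splitting lemma: because Alice's $q_i$ is drawn uniformly from the nonzero vectors \emph{linearly independent of the previous queries}, it is not truly uniform over $\{0,1\}^m \setminus \{0\}$, and the dependence between rounds (through Bob's adaptive choice of $c_i$) must be handled carefully. The standard way out is to couple to a simpler process in which Alice uses fully independent uniform queries, absorbing the deviation into the constant $a$; the error in this coupling is $O(2^{-(m-i)})$ in round $i$ and sums to a constant, which is why only the \emph{existence} of a constant $a$ (rather than a sharp one) is claimed in the statement.
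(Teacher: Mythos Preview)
The paper does not prove this lemma; it is simply cited from \cite{JC:NOVY98}, with the near-optimality of the security bound attributed to \cite{Savvides:2007}. So there is no paper proof to compare against, and I evaluate your argument on its own.

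Your description of the NOVY protocol and your $0$-uniformity argument are correct. The gap is in the security analysis. The per-round splitting estimate
\[
\mathbb{E}\bigl[\max(|T_{i-1}^0|,|T_{i-1}^1|)\,\big|\,T_{i-1}\bigr]\;\le\;\tfrac12\,|T_{i-1}| + O\bigl(\sqrt{|T_{i-1}|}\bigr)
\]
is fine, but iterating it does \emph{not} yield $\mathbb{E}[|T_{m-1}|] \le a'\cdot 2^{t-(m-1)}$. The map $M \mapsto M/2 + c\sqrt{M}$ has a fixed point at $M=\Theta(1)$, so the expected size of the surviving set bottoms out at a constant regardless of how many further rounds are run. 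In fact $\mathbb{E}[|T_{m-1}|]\ge 1$ for \emph{every} nonempty $T$: Bob can fix any $W\in T$ at the outset and answer honestly, which guarantees $W\in\{W_0,W_1\}$ and hence $|T_{m-1}|\ge 1$. Your Markov step therefore gives only the trivial bound $\Pr[|T_{m-1}|\ge 2]\le O(1)$.

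The genuine proof must control $\Pr[|T_{m-1}|\ge 2]$ by something sharper than a first-moment bound on $|T_{m-1}|$. One route is to use concentration (second moments or Chebyshev) to show that $|T_i|$ halves \emph{with high probability} while it is still large, and then treat the final $O(\log|T|)$ rounds by a separate argument; another is to track the number of surviving \emph{pairs} from $T$ rather than the number of surviving elements, since the target event is exactly ``at least one pair survives''. The clean $O(2^{-(m-t)})$ bound claimed here is obtained along these lines in \cite{Savvides:2007}; the original \cite{JC:NOVY98} analysis is already nontrivial and does not reduce to the expectation-plus-Markov sketch you gave.
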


A secure interactive hashing scheme guarantees that one of the outputs is
random; however, in the oblivious transfer protocols, the two binary strings are not
used directly, but as encodings of subsets of sequences. Thus for the protocol to succeed, 
both outputs need to be valid encodings of subsets of $\binom{[n]}{\ell}$.
The original protocol of Cachin et al.~\cite{FOCS:CacCreMar98} for oblivious transfer
used an encoding scheme that has probability of success $1/2$, thus requiring 
that the protocol be repeated several times to guarantee correctness. 
Later, Ding et al.~\cite{TCC:DHRS04} proposed a ``dense'' encoding of 
subsets, ensuring that most $m$-bit strings are valid encodings. More 
precisely, they showed the following result. 

\begin{lemma}[\cite{TCC:DHRS04}]\label{lem:dense-encoding}
Let $\ell \leq n$, $m \geq \lceil \log \binom{n}{\ell} \rceil$, $t_m = \lfloor
2^m/\binom{n}{\ell} \rfloor$. Then there exists an injective mapping $F \colon
\binom{[n]}{\ell} \times [t_m] \to [2^m]$ with $|\im (F)| > 2^m - \binom{n}{\ell}$.
\end{lemma}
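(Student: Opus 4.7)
The plan is to prove the lemma by an essentially trivial counting argument. The domain $\binom{[n]}{\ell} \times [t_m]$ has size $N \cdot t_m$, where $N := \binom{n}{\ell}$. By the definition of $t_m = \lfloor 2^m / N \rfloor$ we can write $2^m = N \cdot t_m + r$ for some integer $r$ with $0 \leq r < N$. In particular $N \cdot t_m = 2^m - r \leq 2^m$, so the domain is no larger than the codomain and an injection exists in principle.

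To make this explicit I would fix any enumeration $S_1, \ldots, S_N$ of the $N$ elements of $\binom{[n]}{\ell}$ (for instance lexicographic order on the indicator vectors, or the colex ranking used by the combinatorial number system, which has the advantage of being efficiently computable and invertible, though the lemma only asserts existence), and then define
\[
F_m(S_i, j) = (i-1)\cdot t_m + j
\]
for $1 \leq i \leq N$ and $1 \leq j \leq t_m$. Injectivity is immediate from the uniqueness of division with remainder in base $t_m$: the pair $(i,j)$ can be recovered from $F_m(S_i,j)$ by writing it as $qt_m + s$ with $1 \leq s \leq t_m$, so that $i = q+1$ and $j = s$. The image is exactly $\{1, 2, \ldots, N \cdot t_m\} \subseteq [2^m]$.

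Finally, I would read off the density bound from the identity $N \cdot t_m = 2^m - r$: since $r < N = \binom{n}{\ell}$, we get $|\im(F_m)| = 2^m - r > 2^m - \binom{n}{\ell}$, which is what the lemma claims. There is no real obstacle to the proof; the only point that deserves a brief comment is that for the interactive-hashing application of Section~\ref{sec:protocol-ot} the encoding and its inverse should be efficiently computable, which is why I would prefer the colex construction over an arbitrary enumeration, even though the lemma itself is purely existential.
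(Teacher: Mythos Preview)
Your argument is correct. The counting is exactly right: with $N=\binom{n}{\ell}$ and $t_m=\lfloor 2^m/N\rfloor$, the domain has size $N t_m = 2^m - r$ for the remainder $0\le r < N$, so any injection into $[2^m]$ will have image of size strictly greater than $2^m - \binom{n}{\ell}$. Your explicit map $F_m(S_i,j)=(i-1)t_m+j$ does the job, and your remark that for the application one wants an efficiently computable and invertible enumeration (such as the combinatorial number system) is well placed.

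As for comparison: the paper does not actually prove this lemma. It is stated as a result of Ding et al.~\cite{TCC:DHRS04} and used as a black box, so there is no ``paper's own proof'' to compare against. Your proof is the natural one and matches what the cited reference does.
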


\subsection{Miscellaneous}

Universal hash functions were introduced by Carter and Wegman \cite{CarWeg79} and are very useful in cryptography.

\begin{definition}[$t$-universal hash functions]
A family of functions $G = \{g \colon \mathcal{H} \to \mathcal{L}\}$ is called
a \textit{family of $t$-universal hash functions} if for $g \getsr G$ and for any $x_1, \ldots, x_t \in \mathcal{H}$, 
the induced distribution on $(g(x_1), \ldots, g(x_t))$ is uniform over $\mathcal{L}^t$.
\end{definition}

For any $\mathcal{H} = \bits^h$ and $\mathcal{L} = \bits^\ell$, there exists a $t$-universal family of hash functions for which the function description has size $poly(h,t)$ bits, and the sampling and computing times are in $poly(h,t)$.

The following is a basic fact that follows from simple counting.

\begin{lemma}\label{lem:entropy-hamming}
Let $0 \leq \delta < 1/2$ and let $x, y \in \{0,1\}^n$ such that $\HD(x,y) \leq \delta n$ and 
$H_{\infty}(X) \geq \alpha n$ where $0 < \alpha < 1$. Then $H_{\infty}(Y) \geq (\alpha - h(\delta)) n$.
\end{lemma}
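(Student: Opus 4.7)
The plan is to bound $\Pr[Y=y]$ from above for every fixed $y \in \{0,1\}^n$ and then take $-\log$ of the maximum. Since $\HD(X,Y) \leq \delta n$ holds with probability one, whenever $Y = y$ the value of $X$ must lie in the Hamming ball $B(y,\delta n) = \{x \in \{0,1\}^n : \HD(x,y) \leq \delta n\}$. Therefore
\begin{displaymath}
\Pr[Y = y] \;\leq\; \Pr[X \in B(y,\delta n)] \;=\; \sum_{x \in B(y,\delta n)} P_X(x).
\end{displaymath}

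Next I would bound each term in the sum using the min-entropy assumption on $X$: for every $x$, $P_X(x) \leq 2^{-H_\infty(X)} \leq 2^{-\alpha n}$. Hence
\begin{displaymath}
\Pr[Y = y] \;\leq\; |B(y,\delta n)| \cdot 2^{-\alpha n}.
\end{displaymath}
The standard volume bound for the Hamming ball when $\delta < 1/2$ gives $|B(y,\delta n)| = \sum_{i=0}^{\lfloor \delta n \rfloor} \binom{n}{i} \leq 2^{h(\delta) n}$, which is exactly the ``simple counting'' step alluded to in the statement. Combining these,
\begin{displaymath}
\Pr[Y = y] \;\leq\; 2^{h(\delta) n} \cdot 2^{-\alpha n} \;=\; 2^{-(\alpha - h(\delta))n}.
\end{displaymath}
Since this holds for every $y$, we get $H_\infty(Y) = \min_y (-\log P_Y(y)) \geq (\alpha - h(\delta))n$, as desired.

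There is essentially no hard step: the only ingredient beyond unpacking definitions is the Hamming ball volume bound $\sum_{i \leq \delta n} \binom{n}{i} \leq 2^{h(\delta) n}$, which is classical and valid precisely because $\delta < 1/2$. The hypothesis $0 < \alpha < 1$ plays no role in the argument beyond making the resulting lower bound meaningful (ensuring $\alpha - h(\delta)$ can be positive for small enough $\delta$); the bound itself holds trivially when $h(\delta) \geq \alpha$.
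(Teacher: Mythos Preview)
Your proof is correct and is exactly the ``simple counting'' argument the paper alludes to (the paper does not spell out a proof, merely stating that the lemma follows from simple counting). The Hamming-ball volume bound you invoke is precisely the content of \lemref{lem:bound_binom} in the paper, so your argument matches the intended one.
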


The next lemma shows that the restrictions of two tuples to random subsets of their positions
have relative Hamming distances that are close to the one between the entire tuples. 

\begin{lemma}\label{lem:alice-bob-random-subset}
Let $x,y \in \{0,1\}^n$, $\cal S$ be a random subset of $[n]$ of size $r$
and consider any $\nu \in [0,1]$. On one hand, if $\HD(x,y) \leq \delta n$,
then $\HD(x^{\mathcal{S}},y^{\mathcal{S}}) < (\delta+\nu) r$ except with 
probability $e^{-r \nu^2/2}$. On the other hand, if $\HD(x,y) \geq \delta n$,
then $\HD(x^{\mathcal{S}},y^{\mathcal{S}}) > (\delta-\nu) r$ except with 
probability $e^{-r \nu^2/2}$.
\end{lemma}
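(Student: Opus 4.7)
The plan is to deduce this directly from Lemma~\ref{lem:random-subset}, which asserts that the $(n,r)$-random subset sampler is a $(\mu,\nu,e^{-r\nu^2/2})$-averaging sampler for any $\mu,\nu>0$. The key idea is to choose the function $f\Colon [n]\to\{0,1\}$ inside the averaging sampler definition so that its average over $[n]$ is the normalized Hamming distance $\HD(X,Y)/n$, and its average over the sampled positions is $\HD(X^{\mathcal{S}},Y^{\mathcal{S}})/r$.

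For the second (lower bound) direction I would take $f(i) = \mathbf{1}[X_i \neq Y_i]$. Then the hypothesis $\HD(X,Y)\geq \delta n$ says $\frac{1}{n}\sum_{i=1}^n f(i) \geq \delta$, so the sampler property with $\mu=\delta$ gives
\begin{displaymath}
\Probsub{\mathcal{S}}{\frac{1}{r}\sum_{i\in\mathcal{S}} f(i) \leq \delta - \nu} \leq e^{-r\nu^2/2},
\end{displaymath}
which is exactly the statement that $\HD(X^{\mathcal{S}},Y^{\mathcal{S}}) > (\delta-\nu)r$ except with probability $e^{-r\nu^2/2}$.

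For the first (upper bound) direction I would flip the roles and work with the equality indicator $g(i) = 1 - \mathbf{1}[X_i \neq Y_i] = \mathbf{1}[X_i = Y_i]$. The hypothesis $\HD(X,Y)\leq\delta n$ yields $\frac{1}{n}\sum_i g(i) \geq 1-\delta$, so applying the averaging sampler with $\mu=1-\delta$ and the same $\nu$ gives $\frac{1}{r}\sum_{i\in\mathcal{S}} g(i) \geq 1-\delta-\nu$ except with probability $e^{-r\nu^2/2}$. Subtracting from~$1$ translates this back to $\HD(X^{\mathcal{S}},Y^{\mathcal{S}}) < (\delta+\nu)r$, as required.

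There is no genuine obstacle here: once one notices that the Hamming distance of a pair of strings on a subset is exactly an average of an indicator function, both halves of the lemma are immediate corollaries of Lemma~\ref{lem:random-subset}. The only minor care needed is to invoke the sampler once with the disagreement indicator and once with the agreement indicator so that the one-sided deviation provided by the averaging sampler yields the one-sided bound desired in each case.
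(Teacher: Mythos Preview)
Your proposal is correct and matches the paper's proof essentially line for line: the paper also invokes \lemref{lem:random-subset}, applying it once with the agreement indicator $f(i)=\mathbf{1}[X_i=Y_i]$ and $\mu=1-\delta$ for the upper bound, and once with the disagreement indicator and $\mu=\delta$ for the lower bound. The only cosmetic difference is that the paper treats the upper-bound part first.
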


\begin{IEEEproof}
Lets begin with the first part of the Lemma.
By \lemref{lem:random-subset}, a random subset sampler is an 
$(\mu,\nu,e^{-r \nu^2/2})$-averaging sampler for any $\mu, \nu > 0$. 
Hence for any $f \colon [n] \to [0,1]$ with $\frac{1}{n}\sum_{i=1}^n f(i) \geq \mu$
\begin{equation} \label{eq:avg}
\Prob{\frac{1}{|\mathcal{S}|} \sum_{i \in \mathcal{S}} f(i) \leq \mu - \nu } \leq e^{-r \nu^2/2},
\end{equation}
Let
  \begin{displaymath}
    f(i) =
    \begin{cases}
      0, & \text{if $x_i \neq y_i$}, \\
      1, & \text{otherwise}.
    \end{cases}
\end{displaymath}
Fix $\mu = 1 - \delta$. Note that $\frac{1}{|\mathcal{S}|} \sum_{i \in \mathcal{S}} f(i) = 1 -
\frac{\HD(x^{\mathcal{S}},y^{\mathcal{S}})}{r}$ and 
$\frac{1}{n}\sum_{i=1}^n f(i) = 1 - \frac{\HD(x,y)}{n} \geq \mu$. Thus by \eqref{eq:avg}
\begin{eqnarray*}
e^{-r \nu^2/2} & \geq & \Prob{\frac{1}{|\mathcal{S}|} \sum_{i \in \mathcal{S}} f(i) \leq \mu - \nu }\\
&=& \Prob{1 - \frac{\HD(x^{\mathcal{S}},y^{\mathcal{S}})}{r} \leq 1 - \delta -\nu}\\
&=& \Prob{\HD(x^{\mathcal{S}},y^{\mathcal{S}}) \geq (\delta + \nu)r},
\end{eqnarray*}
which proves the first part of the Lemma.

The second part of the Lemma uses the same idea, but now the function $f$ is
\begin{displaymath}
    f(i) =
    \begin{cases}
      0, & \text{if $x_i = y_i$}, \\
      1, & \text{otherwise}.
    \end{cases}
\end{displaymath}

Fixing $\mu = \delta$ it holds that
$\frac{1}{|\mathcal{S}|} \sum_{i \in \mathcal{S}} f(i) = \frac{\HD(x^{\mathcal{S}},y^{\mathcal{S}})}{r}$ and 
$\frac{1}{n}\sum_{i=1}^n f(i) = \frac{\HD(x,y)}{n} \geq \mu$ and hence
\begin{eqnarray*}
e^{-r \nu^2/2} & \geq & \Prob{\frac{1}{|\mathcal{S}|} \sum_{i \in \mathcal{S}} f(i) \leq \mu - \nu }\\
&=& \Prob{\frac{\HD(x^{\mathcal{S}},y^{\mathcal{S}})}{r} \leq \delta -\nu}\\
&=& \Prob{\HD(x^{\mathcal{S}},y^{\mathcal{S}}) \leq (\delta - \nu)r},
\end{eqnarray*}
which finishes the proof of the lemma.
\end{IEEEproof}

The following statement of the birthday paradox is standard.

\begin{lemma}
  \label{lem:birthday}
  Let $\calA, \calB \subset [n]$, chosen independently at random, with $|\calA| = |\calB| = 2
  \sqrt{\ell n}$. Then
  \begin{displaymath}
    \Pr[|\calA \cap \calB| < \ell] < e^{-\ell/4}.
  \end{displaymath}
\end{lemma}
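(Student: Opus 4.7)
The plan is to fix $A$ (using independence of $A$ and $B$) and then view $|A \cap B|$ as a hypergeometric random variable obtained by drawing $|B| = 2\sqrt{\ell n}$ elements without replacement from $[n]$ and counting how many fall in the fixed set $A$. Its expectation is
\[
\mu \;=\; \frac{|A|\cdot|B|}{n} \;=\; \frac{(2\sqrt{\ell n})^{2}}{n} \;=\; 4\ell,
\]
which is four times the threshold $\ell$, so a multiplicative concentration bound should suffice comfortably.

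The key tool I would invoke is the lower-tail multiplicative Chernoff bound for the hypergeometric distribution, namely $\Prob{X \leq (1-\delta)\mu} \leq e^{-\delta^{2}\mu/2}$ for every $\delta \in (0,1)$. This is classical: lower-tail probabilities for sampling without replacement are dominated by the corresponding binomial probabilities (Hoeffding, 1963), so the standard Chernoff bound for sums of i.i.d.\ Bernoullis transfers directly. Although \lemref{lem:random-subset} is also available and closely related, it is an additive sampler bound and yields a visibly weaker estimate here, because the mean $4\ell$ is large relative to the gap $3\ell$ we need to rule out; the multiplicative form is the right tool in this regime.

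Choosing $\delta = 3/4$ so that $(1-\delta)\mu = \ell$, the bound gives
\[
\Prob{|A \cap B| < \ell} \;\leq\; \Prob{|A \cap B| \leq \ell} \;\leq\; \exp\!\Bigl(-\tfrac{(3/4)^{2}\cdot 4\ell}{2}\Bigr) \;=\; e^{-9\ell/8} \;<\; e^{-\ell/4},
\]
which is precisely the claimed inequality. The only small obstacle is a stylistic one: deciding whether to invoke the hypergeometric Chernoff bound as a black box or to insert a one-line coupling argument with an i.i.d.\ Bernoulli sum to justify transferring the binomial Chernoff bound to the sampling-without-replacement setting. Either way, the remaining work is a routine arithmetic check.
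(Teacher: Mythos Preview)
Your argument is correct. Conditioning on $A$, the intersection size $|A\cap B|$ is indeed hypergeometric with mean $4\ell$, and the multiplicative lower-tail Chernoff bound transfers from the binomial to the hypergeometric case by Hoeffding's classical domination result. The arithmetic with $\delta=3/4$ is clean and even yields the stronger estimate $e^{-9\ell/8}$.

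As for comparison with the paper: the paper does not actually prove this lemma. Its proof consists solely of a pointer to Corollary~3 of Ding~\cite{C:Ding01}. Your write-up therefore differs in that it is a genuine, self-contained argument rather than a citation. The benefit of your route is that the reader sees exactly why the bound holds (and that it holds with room to spare); the benefit of the paper's route is brevity and avoiding any discussion of which Chernoff variant applies to sampling without replacement. If you keep your version, it would be reasonable to add a one-line citation for the hypergeometric-to-binomial domination (Hoeffding, 1963) so that the ``black box'' step is pinned down.
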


\begin{IEEEproof}
  See corollary 3 in~\cite{C:Ding01}.
\end{IEEEproof}

The following useful lemma will also be needed in the subsequent sections.

\begin{lemma}[\cite{alon2004probabilistic}]
  \label{lem:bound_binom}
  Let $0 < \sigma < 1/2$. Then
  \begin{displaymath}
  \sum_{i=0}^{\sigma k} \binom{k}{i} \leq 2^{h(\sigma) k}.
  \end{displaymath}
\end{lemma}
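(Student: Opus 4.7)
The plan is to use the standard trick of comparing the partial sum to the full binomial expansion of $1 = (\sigma + (1-\sigma))^k$. The bound will follow because, for $\sigma < 1/2$, the term $\sigma^i(1-\sigma)^{k-i}$ is minimized (over $i \in \{0, 1, \dots, \lfloor \sigma k \rfloor\}$) at the rightmost index $i = \sigma k$, so every term in the partial sum dominates this common factor.

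Concretely, first I would observe that for $\sigma < 1/2$ the ratio $\sigma/(1-\sigma) < 1$, and therefore for every $0 \leq i \leq \sigma k$
\[
\sigma^i (1-\sigma)^{k-i} = (1-\sigma)^k \Bigl(\frac{\sigma}{1-\sigma}\Bigr)^i \;\geq\; (1-\sigma)^k \Bigl(\frac{\sigma}{1-\sigma}\Bigr)^{\sigma k} = \sigma^{\sigma k}(1-\sigma)^{(1-\sigma)k}.
\]
Next I would start from the identity $1 = \sum_{i=0}^{k} \binom{k}{i}\sigma^i(1-\sigma)^{k-i}$, drop the terms with $i > \sigma k$ (they are nonnegative), and apply the lower bound above to each of the remaining factors, yielding
\[
1 \;\geq\; \sigma^{\sigma k}(1-\sigma)^{(1-\sigma) k} \sum_{i=0}^{\sigma k}\binom{k}{i}.
\]
Finally I would rearrange and recall that $h(\sigma) = -\sigma \log \sigma - (1-\sigma)\log(1-\sigma)$, so that $\sigma^{-\sigma k}(1-\sigma)^{-(1-\sigma)k} = 2^{h(\sigma) k}$, which gives $\sum_{i=0}^{\sigma k}\binom{k}{i} \leq 2^{h(\sigma) k}$, and the stated sum (starting at $i=1$) is only smaller.

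There is no real obstacle here; the proof is a one-paragraph calculation. The only mild subtlety is the direction of the inequality $\sigma^i(1-\sigma)^{k-i} \geq \sigma^{\sigma k}(1-\sigma)^{(1-\sigma)k}$, which crucially relies on $\sigma < 1/2$ (so that the exponential base $\sigma/(1-\sigma)$ is strictly less than $1$); this is precisely why the hypothesis $\sigma < 1/2$ appears in the statement.
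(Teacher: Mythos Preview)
Your proof is correct and follows essentially the same approach as the paper: both bound $\sigma^i(1-\sigma)^{k-i}$ from below by $\sigma^{\sigma k}(1-\sigma)^{(1-\sigma)k}=2^{-h(\sigma)k}$ for $i\le\sigma k$ (using $\sigma<1/2$), multiply through by $\binom{k}{i}$, sum, and compare with the full binomial expansion of $1=(\sigma+(1-\sigma))^k$. Your version is in fact slightly more explicit about \emph{why} that pointwise inequality holds (via the monotone factor $(\sigma/(1-\sigma))^i$), but the argument is the same.
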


%
%
The following lemma by Rompel will be also useful.

\begin{lemma}[\cite{Rompel90}]\label{lem:rom}
Suppose $t$ is a positive even integer, $X_1, \cdots, X_u$ are $t$-wise independent random variables taking values in the range $[0,1]$, $X=\sum_{i=1}^u X_i$, $\mu = E[X]$, and $A>0$. Then
$$\Prob{|X-\mu|>A}<O\left(\left(\frac{t\mu + t^2}{A^2}\right)^{t/2}\right).$$
\end{lemma}

\section{Security Model}\label{sec:model}

\subsection{Bounded Storage Model} 

We work in a two-party scenario, where two players (Alice and Bob) engage in cryptographic protocols, more specifically commitment and oblivious transfer protocols. We assume that one of the players has an upper bound on the available memory. As usual in the cryptographic literature, we assume an adversary that can corrupt either party. We will call a corrupted party \emph{dishonest}.  Parties that have not been corrupted will be called \emph{honest}.

Cryptographic protocols in the bounded storage model run a transmission phase prior to their main part. We briefly describe this phase here.

\textbf{Transmission Phase:}
In this phase, the sender (Alice) has access to a sample $x\in \bits^n$ from an
$\alpha n$-source $X$, where $0 < \alpha < 1$, and the receiver (Bob) to $\widetilde{x} \in \bits^n$
such that $\HD(x,\widetilde{x}) \leq \delta n$. Note that this captures both 
the situation where the source is noisy and  
the situation where an adversary controls part of the source. In the bounded storage model normally a memory bound is imposed 
on both parties during this phase, but we are able to prove the security of our protocols while imposing a memory bound on only one of them 
(which one depends on the specific protocol). For a memory bounded Alice, she computes a randomized function $f(x)$ with output size at most $\gamma n$ for 
$\gamma < \alpha$, stores its output and discards $x$. Similarly, for a memory bounded Bob, he computes 
a randomized function $\widetilde{f}(\widetilde{x})$ with output size at most $\gamma n$ for 
$\gamma < \alpha$, stores its output and discards $\widetilde{x}$. We should mention that in all the proposed protocol the honest parties only have to store a bounded 
amount of information. It should also be highlighted that even if the memory bounded party
 gains infinite storage power after the 
transmission phase is over and the source is not available anymore, 
this does not affect the security of the protocol, i.e., it has everlasting security.

\subsection{Secure Commitment} 

The main part of a commitment protocol has two phases: commitment and opening.

\textbf{Commitment Phase:} Alice has an input string $v \in \calV$
(which is a realization of a random variable $V$) that
she wants to commit to. The parties exchange messages, possibly in several 
rounds. Let $\trans{\CP}{v}$ denote all the communication in this phase 
and $\viewbob{\CP}{v}$ Bob's view at the end of this phase.
These random variables are a function of $v$, the functions that the parties computed from the public random source and the parties' local randomness.

\textbf{Opening Phase:} Alice sends Bob the string $\widetilde{v}$
that she claims she committed to. The parties can then exchange 
messages in several rounds. Let $\trans{\OP}{\widetilde{v}}$ denote all the communication in this phase.
In the end Bob performs a test
$$\test\left(\viewbob{\CP}{v},\trans{\OP}{\widetilde{v}}\right)$$ that outputs 1 if Bob accepts Alice's 
commitment and 0 otherwise. 

\textbf{Security.} A commitment protocol is called $(\lcor,\lhid,\lbin)$-secure 
if it satisfies the following properties:

\begin{enumerate}
\item $\lcor$-correct: if Alice and Bob are honest, then for every possible $v$, the probability that the protocol aborts is at most $\lcor$
\ifnum\breakequations=1
	\begin{eqnarray*}
		\Prob{\text{no aborts and }\test\left(\viewbob{\CP}{v},\trans{\OP}{v}\right)= 1}\\
		 \geq 1 - \lcor.
	\end{eqnarray*}
\else
	$$\Prob{\text{no aborts and }\test\left(\viewbob{\CP}{v},\trans{\OP}{v}\right)= 1} \geq 1 - \lcor.$$
\fi

\item $\lhid$-hiding: if Alice is honest then Bob's knowledge on her committed value is at most $\lhid$, $$I(V; \viewbob{\CP}{V} | \widetilde{X}) \leq \lhid.$$

\item $\lbin$-binding: if Bob is honest, then there are no $v$ and $\widetilde{v} \neq \hat{v}$ that can be successfully open,  
$$\Prob{\test\left(\viewbob{\CP}{v},\trans{\OP}{\widetilde{v}}\right)= 1} \geq \lbin$$
and $$\Prob{\test\left(\viewbob{\CP}{v},\trans{\OP}{\hat{v}}\right)= 1} \geq \lbin.$$
\end{enumerate}

\subsection{Secure Oblivious Transfer}

We use the definition of oblivious transfer presented
in~\cite{TCC:DHRS04}. An oblivious transfer protocol is a protocol between
two players, Alice and Bob, in which Alice inputs two strings $s_0, s_1 \in
\calV$ and outputs nothing, and Bob inputs $c \in \{0,1\}$ and outputs $s
\in \{\bot,s_c\}$. Let $\viewalice{\OT}{s_0,s_1;c}$ denote the view of an Alice that interacts with an honest Bob. 
Similarly, let $\viewbob{\OT}{s_0,s_1;c}$ denote the view of a Bob that interacts with an honest Alice. 

Intuitively, the protocol will be secure for Bob if the view of Alice does not
depend on the choice bit $c$, and secure for Alice if Bob cannot obtain any
information about $s_{1-c}$. However this is tricky to formalize, because a
malicious Bob could choose to play with a different bit, depending on the public
random source and the messages exchanged before any secret is used by Alice.

In order to have more generality, the main part of the oblivious transfer protocol is divided in two phase: the setup phase, which encompass all communication 
before Alice first uses her secrets, and the transfer phase, which happens from that point on. Two pairs of inputs  $(s_0,s_1),(s_0',s_1')$ are called $i$-consistent if 
$s_i = s_i'$ for $i \in \{0,1\}$. By the end of the setup phase there should exist a random variable $I$, such that for any two $I$-consistent pairs of inputs, the resulting view of Bob is 
statistically close.\\

\noindent\textbf{Security:} A protocol is called $(\lcor,\lbob,\lalice)$-secure if it 
satisfies the following properties:

\begin{enumerate}
\item $\lcor$-correct: if Alice and Bob are honest, then
  \begin{displaymath}
    \Prob{\text{no aborts and } s = s_c} \geq 1 - \lcor
  \end{displaymath}

\item $\lbob$-secure for Bob: for any strategy used by Alice,
  \begin{displaymath} 
    \left\|\left\{\viewalice{\OT}{s_0,s_1;0}\right\} -
      \left\{\viewalice{\OT}{s_0,s_1;1} \right\}\right\|
    \leq \lbob
  \end{displaymath}

\item $\lalice$-secure for Alice: for any strategy used by Bob with
  input $c$, there exists a random variable $I$, defined at the end of the setup
  stage, such that for every two $I$-consistent pairs $(s_0,s_1), (s_0',s_1')$,
  we have
  \begin{displaymath}
    \left\|\left\{\viewbob{\OT}{s_0,s_1;c} \right\} -
    \left\{\viewbob{\OT}{s'_0,s'_1;c} \right\}\right\| \leq \lalice
  \end{displaymath}
\end{enumerate}

\section{Commitment Protocols}\label{sec:commit}

In this section we present our three commitment protocols. There are tradeoffs between the range of noise that could be tolerated and the round complexity of the proposed protocols. 
In short, the first protocol only has one message from Bob to Alice, but tolerates less noise than the second protocol, which has more rounds. Both of these protocols impose 
a memory bound on Bob, the receiver of the commitment protocol, but nothing is assumed about the memory capacity of Alice. The third protocol then deals with the complementary case in which a memory
bound is assumed on Alice, the committer, but no memory bound is assumed on Bob. Note that in many scenarios one of the parties is much more powerful than the other, so it is quite useful to
have commitments protocols working in both directions and just assume a memory bound on the weaker party.

\subsection{A Simple String Commitment Protocol}\label{sec:protocol-sc}

Next we present a quite simple string commitment protocol that only involves one message from 
Bob to Alice. A memory bound on Bob is assumed.
The scheme works as follows. First, both parties sample a number of bits from
the public source. Alice then extracts the randomness of her sample and uses it
to conceal her commitment before sending it to Bob. This guarantees the hiding
condition. She also computes a hash of her sample, where the hash function is
chosen by Bob. Alice sends Bob the concealed commitment along with the hash
value. In the open phase, Alice sends her committed value and her sampled
string. Bob then performs a number of checks for consistency. These checks
enforce binding. The details of the protocol are presented below. 

The security parameter is $\ell$ and $k$ is set as $k=2\sqrt{\ell n}$ in order to satisfy the requirements of Lemma \ref{lem:birthday}.
Fix $\varepsilon'>0$ and let $\rho = \alpha - \gamma - \frac{1 + \log(1/\varepsilon')}{n}$ according to the requirements of Lemma \ref{lem:gr}. 
To satisfy the requirements of Lemma \ref{lem:vadhan_minentropy}, fix $\tau$ such that $\frac{\rho}{3} \geq \tau > 0$, and also fix $\omega, \zeta > 0$ such that
$\rho-3\tau>\omega > 2h(\delta+\zeta)$ and $\delta+\zeta < 1/2$. Let $k_E = (\rho -3\tau - \omega) k$ and for 
$\psi >0$, $m = (1-\psi)k_E$ be the parameters of the strong extractor (Lemma \ref{lem:ext}). The message space is $\calV = \bits^m$.
It is assumed that the following functionalities, which are 
possible due to the lemmas in~\secref{sec:prelim}, are available to the parties:

\begin{itemize}
  
\item A family $\calG$ of 2-universal hash functions $g \colon \{0,1\}^k \to
\{0,1\}^{\omega k}$.

\item A $(k_E, \varepsilon_E)$-strong extractor $\Ext \colon \{0,1\}^k \times
\{0,1\}^{r} \to \{0,1\}^{m}$, for an arbitrary $\varepsilon_E>e^{-k/2^{O(\log^* k)}}$.
\end{itemize}

\begin{remark}
Note that it should hold that $2h(\delta) < \omega + 3\tau < \rho < \alpha - \gamma$, 
so the  protocol is only possible if $2h(\delta) < \alpha - \gamma$. 
\end{remark}

\subsection*{Transmission phase:}
\begin{enumerate}

\item Alice chooses uniformly $k$ positions from $X$. Similarly, Bob samples 
$k$ positions from $\widetilde{X}$. We call their sets of positions $\calA$ and $\calB$, respectively.
\end{enumerate}

\subsection*{Commit phase:}
\begin{enumerate}

\item Alice announces $\calA$ to Bob.

\item Bob chooses $g \getsr \calG$ and sends its description to Alice.

\item Alice computes $p \gets g(x^\calA)$, $u \getsr \{0,1\}^{r}$, and $y \gets
\Ext(x^\calA,u)$. She then computes $z = v \oplus y$ and sends $(z, 
p, u)$ to Bob in order to commit to $v$.
\end{enumerate}

\subsection*{Open phase:}

\begin{enumerate}
\item Alice sends $v'$ and $w$ to Bob, which are defined as $v'=v$ and 
$w=x^\calA$ in the case that she is honest.

\item Let $\calC = \calA \cap \calB$, $c = |\calC|$ and $w^\calC$ be the restriction of $w$
to the positions corresponding to the set $\calC$. Bob verifies whether 
$c \geq \ell$, $\HD(w^\calC,\widetilde{x}^\calC) \leq (\delta + \zeta) c$,
$p=g(w)$ and $v' = \Ext(w,u) \oplus z$. If any 
verification fails Bob outputs $0$, otherwise he outputs $1$.
\end{enumerate}

\begin{theorem}\label{th:sc}
The protocol is $(\lcor,\lhid,\lbin)$-secure for $\lcor,\lhid$ and $\lbin$ negligible in $\ell$.
\end{theorem}

\begin{IEEEproof} \textbf{Correctness:}
It is clear that if both Alice and Bob are honest, the protocol will fail only
in the case that $c<\ell$ or $\HD(x^\calC,\widetilde{x}^\calC) > (\delta + \zeta) c$. 
By \lemref{lem:birthday}, $c\geq \ell$ except with probability at most $e^{-\ell/4}$.
By \lemref{lem:alice-bob-random-subset}, $\HD(x^\calC,\widetilde{x}^\calC) \leq
(\delta+\zeta) c$ except with probability at most $e^{-c \zeta^2/2}$, which is negligible
in $\ell$ if $c\geq \ell$. 

\noindent\textbf{Hiding:} After the commit phase, (a possibly malicious) Bob possesses 
$(z, p, \calA, u)$, $g$ and the output of a function $\widetilde{f}(\cdot)$
of $\widetilde{x}$, where $|\widetilde{f}(\widetilde{x})| \leq \gamma n$ with 
$\gamma < \alpha$. The only random variable that can provide mutual
information about $V$ when conditioned on $\widetilde{X}$ is $Z$, but we prove below 
that $Z$ is almost uniform from Bob's point of view, and so 
it works as an one-time pad and only negligible information can be leaked.

By \lemref{lem:gr},
\begin{displaymath}
  R_{\infty}^{\varepsilon'}(X | \widetilde{f}(\widetilde{X})) \geq \alpha - \gamma -
  \frac{1+\log(1/\varepsilon')}{n} = \rho.
\end{displaymath}

Since Alice chooses $\calA$ randomly and this is an $(\mu,\nu,e^{-k \nu^2/2})$-averaging 
sampler for any $\mu,\nu >0$ according to \lemref{lem:random-subset}. By setting 
$\mu=\frac{\rho-2\tau}{\log (1/ \tau)}$, $\nu=\frac{\tau}{\log (1/ \tau)}$, we have 
by \lemref{lem:vadhan_minentropy} that 
\begin{displaymath}
R_{\infty}^{\varepsilon''+\varepsilon'}(X^\calA | \calA, \widetilde{f}(\widetilde{X})) \geq \rho - 3\tau,
\end{displaymath}
where $\varepsilon''$ is a negligible function of $k$.

It holds that
\ifnum\breakequations=1
	\begin{eqnarray*}
	 \! &&H_{\infty}^{\varepsilon''+\varepsilon'}(X^\calA | G(X^\calA), \calA, U, G,  \widetilde{f}(\widetilde{X})) \\
	 &  &  \qquad \qquad = H_{\infty}^{\varepsilon''+\varepsilon'}(X^\calA | G(X^\calA), \calA, \widetilde{f}(\widetilde{X}))\\
	 & &  \qquad \qquad \geq  H_{\infty}^{\varepsilon''+\varepsilon'}(X^\calA | \calA, \widetilde{f}(\widetilde{X})) - H_0(G(X^\calA))\\
	 &&   \qquad \qquad \geq (\rho - 3\tau - \omega) k\\
	 & &  \qquad \qquad = k_E.
	\end{eqnarray*}
\else
	\begin{eqnarray*}
	H_{\infty}^{\varepsilon''+\varepsilon'}(X^\calA | G(X^\calA), \calA, U, G, \widetilde{f}(\widetilde{X})) &=&
	H_{\infty}^{\varepsilon''+\varepsilon'}(X^\calA | G(X^\calA), \calA, \widetilde{f}(\widetilde{X}))\\
	& \geq & H_{\infty}^{\varepsilon''+\varepsilon'}(X^\calA | \calA, \widetilde{f}(\widetilde{X})) - H_0(G(X^\calA))\\
	& \geq & (\rho - 3\tau - \omega) k\\
	&=& k_E.
	\end{eqnarray*}
\fi

Therefore, setting $\varepsilon'$ and $\varepsilon_E$
to be negligible in $\ell$, the use of the strong extractor to obtain $y$
(and of $y$ to xor the message) guarantees that only negligible information 
about the committed message can be leaked.

\textbf{Binding:}
The protocol is binding if, after the commit phase, Alice cannot 
choose between two different values to successfully open. 
Let $\sigma = \delta +\zeta$. 
The only way Alice can cheat is if she can come up with two strings 
$w, w'$ such that $g(w) = g(w')$, $\HD(w^\calC,\widetilde{x}^\calC) \leq 
\sigma c$ and $\HD(w'^\calC,\widetilde{x}^\calC) \leq \sigma c$ (with $c\geq \ell$).
If this happens, it holds that either there are two strings 
$w, w'$ such that $g(w) = g(w')$, $\HD(w,\widetilde{x}^\calA) \leq 
\sigma k$ and $\HD(w',\widetilde{x}^\calA) \leq \sigma k$; or Alice can 
compute $w$ (without knowing the set $\calB$ that together
with $\calA$ determines $\calC$) such that $\HD(w,\widetilde{x}^\calA) > \sigma k$ 
and $\HD(w^\calC,\widetilde{x}^\calC) \leq \sigma c$.
It is proven below that the probability that Alice succeeds in cheating 
decreases exponentially with the security parameter $\ell$ (or, 
equivalently in $k,c$). First the probability that there exists two different 
strings $w, w'$ both within Hamming distance $\sigma k$ from 
$\widetilde{x}^\calA$ and such that $g(w) = g(w')$ is upper bounded by
\ifnum\breakequations=1
	\begin{eqnarray*}
 	&& \Prob{\exists w, w'~\suchthat
	\left\{
	\begin{array}{l}
	w \neq w'\\  
	g(w) = g(w')\\
	\HD(w,\widetilde{x}^\calA) \leq \sigma k\\
	\HD(w',\widetilde{x}^\calA) \leq \sigma k
	\end{array}
	\right.
	} = \\
	&  &\qquad =\sum_{w \suchthatt \HD(w,\widetilde{x}^\calA) \leq \sigma k}
	\left(\sum_{w' \neq w \suchthatt \HD(w',\widetilde{x}^\calA) \leq \sigma k} 2^{-\omega k}
	\right)\\	
	&& \qquad  \leq  2^{-(\omega-2 h(\sigma))k}
	\end{eqnarray*}
\else
	\begin{eqnarray*}
	\Prob{\exists w, w'~\suchthat
	\left\{
	\begin{array}{l}
	w \neq w'\\  
	g(w) = g(w')\\
	\HD(w,\widetilde{x}^\calA) \leq \sigma k\\
	\HD(w',\widetilde{x}^\calA) \leq \sigma k
	\end{array}
	\right.
	}
	&=& \sum_{w \suchthatt \HD(w,\widetilde{x}^\calA) \leq \sigma k}
	\left(\sum_{w' \neq w \suchthatt \HD(w',\widetilde{x}^\calA) \leq \sigma k} 2^{-\omega k}
	\right)\\
	& \leq & 2^{-(\omega-2 h(\sigma))k},
	\end{eqnarray*}
\fi

where \lemref{lem:bound_binom} was used to obtain the inequality.
By design, it holds that $\omega > 2 h(\sigma)$, therefore the probability 
that Alice successfully cheats by finding two strings that are at distance 
at most $\sigma k$ from $\widetilde{x}^\calA$ and hash to the same value is 
negligible in $k$.

Now considering the second case, by assumption $w$ has Hamming
distance $(\sigma +\psi) k$ from $\widetilde{x}^\calA$ for some $\psi > 0$. Since Bob 
is honest, $\calB$ is chosen randomly. Hence \lemref{lem:alice-bob-random-subset} 
can be applied and thus the probability that
$\HD(w^\calC,\widetilde{x}^\calC) \leq \sigma c$ is smaller than $e^{-c\psi^2/2}$.
\end{IEEEproof}


\subsection{Extending the Feasibility Region}\label{sec:extend}

While the previous protocol is simple, efficient and round optimal, it works for a rather limited range of noise: $h(\delta) < (\alpha - \gamma)/2$. We next present a more elaborate protocol that works for a much larger range of noise $h(\delta) < \alpha - \gamma$ at the cost of increasing the rounds of communication. The memory bound is still on Bob. The idea for guaranteeing the binding property is to use two rounds of hash challenge-responses in order to guarantee the binding condition. Consider the initial set of viable strings that Alice can possibly send to Bob during the commitment phase that would pass the Hamming distance test. The first hash challenge-response round binds Alice to one specific output of the hash function, and thus restrict the set of viable strings to be polynomial in the security parameter. The second hash challenge-response round then binds Alice to one specific value for the commitment. Our solution is based on families of $4k$-universal hash functions. 
 This approach has been used before in a different context \cite{EC:DamKilSal99}.

The security parameter is $\ell$ and $k$ is set as $k=2\sqrt{\ell n}$ in order to satisfy the requirements of Lemma \ref{lem:birthday}.
Fix $\varepsilon'>0$ and let $\rho = \alpha - \gamma - \frac{1 + \log(1/\varepsilon')}{n}$ according to the requirements of Lemma \ref{lem:gr}. 
To satisfy the requirements of Lemma \ref{lem:vadhan_minentropy}, fix $\tau$ such that $\frac{\rho}{3} \geq \tau > 0$, and also fix $\omega_1,\omega_2, \zeta > 0$ such that $\rho-3\tau>\omega_1+\omega_2$,
$\omega_1 > h(\delta+\zeta)$,  and $\delta+\zeta < 1/2$. Let $k_E = (\rho -3\tau - \omega_1 - \omega_2) k$ and for 
$\psi >0$, $m = (1-\psi)k_E$ be the parameters of the strong extractor (Lemma \ref{lem:ext}). The message space is $\calV = \bits^m$.
It is assumed that the following functionalities, which are 
possible due to the lemmas in~\secref{sec:prelim}, are available to the parties:

\begin{itemize}
  
\item A family $\calG_1$ of $4k$-universal hash functions $g_1 \colon \{0,1\}^k \to
\{0,1\}^{\omega_1 k}$.

\item A family $\calG_2$ of $2$-universal hash functions $g_2 \colon \{0,1\}^k \to
\{0,1\}^{\omega_2 k}$.

\item A $(k_E, \varepsilon_E)$-strong extractor $\Ext \colon \{0,1\}^k \times
\{0,1\}^{r} \to \{0,1\}^{m}$, for an arbitrary $\varepsilon_E>e^{-k/2^{O(\log^* k)}}$.
\end{itemize}

\begin{remark}
Note that it should hold that $h(\delta) < \omega_1 + 3\tau < \rho < \alpha - \gamma$, 
so the  protocol is only possible if $h(\delta) < \alpha - \gamma$. 
\end{remark}

\subsection*{Transmission phase:}
\begin{enumerate}

\item Alice chooses uniformly $k$ positions from $X$. Similarly, Bob samples 
$k$ positions from $\widetilde{X}$. We call their sets of positions $\calA$ and $\calB$, respectively.
\end{enumerate}

\subsection*{Commit phase:}
\begin{enumerate}

\item Alice announces $\calA$ to Bob.

\item Bob chooses $g_1 \getsr \calG_1$ and sends its description to Alice.

\item Alice computes $p_1 \gets g_1(x^\calA)$ and sends it to Bob.

\item Bob chooses $g_2 \getsr \calG_2$ and sends its description to Alice.

\item Alice computes $p_2 \gets g_2(x^\calA)$, $u \getsr \{0,1\}^{r}$, and $y \gets
\Ext(x^\calA,u)$. She then computes $z = v \oplus y$ and sends $(z, 
p_2, u)$ to Bob in order to commit to $v$.
\end{enumerate}

\subsection*{Open phase:}

\begin{enumerate}
\item Alice sends $v'$ and $w$ to Bob, which are defined as $v'=v$ and 
$w=x^\calA$ in the case that she is honest.

\item Let $\calC = \calA \cap \calB$, $c = |\calC|$ and $w^\calC$ be the restriction of $w$
to the positions corresponding to the set $\calC$. Bob verifies whether 
$c \geq \ell$, $\HD(w^\calC,\widetilde{x}^\calC) \leq (\delta + \zeta) c$,
$p_1=g_1(w)$, $p_2=g_2(w)$ and $v' = \Ext(w,u) \oplus z$. If any 
verification fails Bob outputs $0$, otherwise he outputs $1$.
\end{enumerate}

\begin{theorem}
The protocol is $(\lcor,\lhid,\lbin)$-secure for $\lcor,\lhid$ and $\lbin$ negligible in $\ell$.
\end{theorem}

\begin{IEEEproof} \textbf{Correctness:} Same as in \thref{th:sc}.

\noindent\textbf{Hiding:} Follows the same lines as in \thref{th:sc}. The difference is that here $k_E = (\rho -3\tau - \omega_1 - \omega_2) k$ 
in order to account for the entropy loss due to the output of both hash functions $g_1$ and $g_2$ (instead of $k_E = (\rho -3\tau - \omega)$ in \thref{th:sc} that accounts for the output of a single hash function $g$).

\textbf{Binding:}
The protocol is binding if, after the commit phase, Alice cannot 
choose between two different values to successfully open. 
Let $\sigma = \delta +\zeta$. 
The only way Alice can cheat is if she can come up with two different strings 
$w, w'$ that pass all tests performed by Bob during the opening phase. Either $\HD(w,\widetilde{x}^\calA) \leq 
\sigma k$ and $\HD(w',\widetilde{x}^\calA) \leq \sigma k$; or Alice can 
compute $w$ (without knowing the set $\calB$ that together
with $\calA$ determines $\calC$) such that $\HD(w,\widetilde{x}^\calA) > \sigma k$ 
and $\HD(w^\calC,\widetilde{x}^\calC) \leq \sigma c$. The probability that Alice succeeds in cheating
in the latter case can be upper bounded as in \thref{th:sc}. Below we upper bound her cheating success probability in the former case and prove that it 
decreases exponentially with the security parameter $\ell$ (or, 
equivalently in $k$). 

Let the viable set dynamically denote the strings that Alice can possibly send to Bob with non-negligible probability of successful opening. Before the first round of hash challenge-response, the viable set consists of all $w$ such that $\HD(w,\widetilde{x}^\calA) \leq \sigma k$. Now lets consider an arbitrary fixed value $p_1$ for the output of the first hash. Considering the $j$-th viable string before the first hash challenge-response round, define $I_j$ as $1$ if the $j$-th viable string is mapped by $g_1$ to $p_1$; otherwise $I_j=0$. And define $I =\sum_j I_j$. Clearly $\mu=E[I]<1$, as $g_1$ is chosen from a $4k$-universal family of hash functions with range of size $\{0,1\}^{\omega_1 k}$ for $\omega_1 > h(\delta+\zeta)$. Let $p_1$ be called bad if $I$ is bigger than $8k+1$. Using the fact that $g_1$ is $4k$-wise independent and applying \lemref{lem:rom} with $t=4k$ and $A=2t=8k$, we get
\begin{eqnarray*}
\Prob{I>8k+1}&<&O\left(\left(\frac{t\mu + t^2}{(2t)^2}\right)^{t/2}\right)\\
&<&O\left(\left(\frac{1 + t}{4t}\right)^{t/2}\right)\\
&<&O\left(2^{-t/2}\right).
\end{eqnarray*}

Then the probability that any $p_1$ is bad is upper bounded by 
$$O\left(2^{\omega_1 k}2^{-t/2}\right) < O\left(2^{-k}\right).$$

If the viable set is reduced to at most $8k+1$ elements after the first hash challenge-response round, then the probability that some of those collide 
in the second hash challenge-response round is upper bounded by
$$\left(8k+1\right)^22^{-\omega_2 k},$$
which is negligible in $k$.
\end{IEEEproof}

\subsection{Alternative Bit Commitment Protocol}\label{sec:protocol-bc}

Next we design a $\emph{bit}$ commitment protocol where the memory bound is imposed on Alice instead of Bob. The protocol works for $h(\delta) < \alpha - \gamma$ and uses (the cheaper) families of 2-universal hash functions, instead of $4k$-universal hash functions. The central idea is to use an interactive hashing execution to perform the bit commitment~\cite{IMA:ShiYam11}.

Before describing our solution, we remark that is important to obtain protocols that work for memory bounded Alice and protocols that work for memory bounded Bob. This is particularly interesting in the case of an asymmetry of power between the parties - when one of the parties is much more powerful than the other. It makes sense to impose the bound on the weak party, whenever it is the sender of the commitment (Alice) or the receiver of the commitment (Bob).

Alice has a bit $v$ which she wants to commit to.
The security parameter is $\ell$ and $k$ is set as $k=2\sqrt{\ell n}$ in order to satisfy the requirements of Lemma \ref{lem:birthday}. Fix $\varepsilon'>0$ and $\xi >0$ such that $\delta + \xi < 1/2$, and let 
$\rho = \alpha - \gamma - \frac{1 + \log(1/\varepsilon')}{n}$ according to the requirements of Lemma \ref{lem:gr}. Fix $0<\zeta<1$
and $\tau$ such that $\frac{\rho}{3} \geq \tau > 0$ to satisfy the requirements of Lemma \ref{lem:vadhan_minentropy}. Let $\mu=\frac{\rho-2\tau}{\log (1/ \tau)}$, 
$\nu=\frac{\tau}{\log (1/ \tau)}$ and $\varepsilon''=e^{-\ell \nu^2/2}-2^{-\Omega(\tau n)}$, where the last term comes 
from \lemref{lem:vadhan_minentropy}. Fix  $m \geq \ell \left(\log{k} + 1\right)$ and $m - O(\ell) \geq t \geq m - \zeta \log (1/(\varepsilon'+\varepsilon''))$ according to Lemma \ref{lem:interactive-hashing-ding}. 
It is assumed that the following functionality, which is
possible due to the lemmas in~\secref{sec:ih}, is available to the parties:

\begin{itemize}  
\item An $2^{-m}$-uniform $(t,2^{-(m-t)+O(\log m)})$-secure interactive hashing protocol with input domain $\calW = \bits^m$ and an associated dense encoding of subsets $F$
for tuples of size $k$ and subsets of size $\ell$.
\end{itemize}

The following bit commitment protocol is correct and secure if $h(\delta+\xi) < \rho - 3\tau$.

\subsection*{Transmission phase:}
\begin{enumerate}

\item Alice chooses uniformly $k$ positions from $X$. Similarly, Bob samples 
$k$ positions from $\widetilde{X}$. We call their sets of positions $\calA$ and $\calB$, respectively.
\end{enumerate}

\subsection*{Commit phase:}
\begin{enumerate}

\item Bob announces $\calB$ to Alice. Alice computes $\calD = \calA \cap \calB$. If 
$|\calD| < \ell$, Alice aborts. Otherwise, Alice picks a random 
subset $\calC$ of $\calD$ of size $\ell$.

\item Alice computes the encoding $w$ of $\calC$ (as a subset of $\calB$). Alice and 
Bob interactively hash $w$, producing two strings $w_0,w_1$. 
They compute the subsets $\calC_0, \calC_1 \subset \calB$ that are respectively encoded in 
$w_0, w_1$. If either encoding is invalid, they abort.

\item Alice sends $p = v \oplus d$ to Bob, where $w_d = w$. 
\end{enumerate}

\subsection*{Open phase:}

\begin{enumerate}
\item Alice sends $v'$ and $x'^{\calC'}$ to Bob, which are defined as $v'=v$ and 
$x'^{\calC'}=x^{\calC}$ in the case that she is honest.

\item Bob computes $d' = p \oplus v'$ and checks whether $\HD(x'^{\calC'},\widetilde{x}^{\calC_{d'}}) \leq (\delta + \xi) \ell$. 
If the verification fails Bob outputs $0$, otherwise he outputs $1$.
\end{enumerate}

\begin{theorem}
The protocol is $(\lcor,0,\lbin)$-secure for $\lcor$ and $\lbin$ negligible in $\ell$.
\end{theorem}

\begin{IEEEproof} \textbf{Correctness:}
If both participants are honest, the protocol fails only in the following cases: 
(1) $|\calD|<\ell$; (2) $\HD(x^\calC,\widetilde{x}^\calC) > (\delta + \xi) \ell$ or (3) $w_0$ or $w_1$ is 
an invalid encoding of a subset. 
By \lemref{lem:birthday}, $|\calD|\geq \ell$ except with probability at most $e^{-\ell/4}$.
By \lemref{lem:alice-bob-random-subset}, $\HD(x^\calC,\widetilde{x}^\calC) \leq
(\delta+\xi) \ell$ except with probability at most $e^{-\ell \xi^2/2}$.
Finally, since $w_d = w$ is the encoding of $C$, one of the two outputs of the interactive 
hashing protocol is always a valid encoding. The other output $W_{1-d}$ is $2^{-m}$-close 
to distributed uniformly over the $2^{-m} - 1$ strings different 
from $w_d$. Since it is a dense encoding, \lemref{lem:dense-encoding} implies
that the probability that it is not a valid encoding is thus less than or equal to
\ifnum\breakequations=1
	\begin{eqnarray*}
		2^{-m} + \frac{\binom{k}{\ell}}{ 2^{m}-1} &\leq& 2^{-m} + 2^{\ell \log k - m+1} \\
		&\leq& 2^{-\ell \log k - \ell}+ 2^{-\ell+1}\\ 
		&\leq&  2^{-\ell+2},
	\end{eqnarray*}
\else
	$$2^{-m} + \frac{\binom{k}{\ell}}{ 2^{m}-1} \leq 2^{-m} + 2^{\ell \log k - m+1} \leq 2^{-\ell \log k - \ell}
	+ 2^{-\ell+1} \leq  2^{-\ell+2}$$,
\fi
for $m \geq \ell \left(\log k+1\right)$.

Putting everything together this proves the correctness.

\noindent\textbf{Hiding:}
There are two possibilities: either the protocol does not abort; or it  aborts due to $|\calD|<\ell$ or an invalid encoding. If the protocol aborts, Alice still has not sent 
$p = v \oplus d$, so Bob's view is independent from $V$. 
On the other hand, if the protocol does not abort, then $w_{1-d}$ is a valid encoding
of some set $\calC'$. Due to the properties of the interactive hashing
protocol, Bob's view is then consistent with both
\begin{enumerate}
\item Alice committing to $v$ and $\calC$ being the subset for which she knows the positions of $x$, and
\item Alice committing to $1-v$ and $\calC'$ being the subset for which she knows the positions of $x$.
\end{enumerate}
Hence Bob's view is independent of $V$. 

\noindent\textbf{Binding:}
The strategy of the proof is to demonstrate that there is an $i$ such $X^{\calC_{i}}$ has 
high enough min-entropy from Alice's point of view so that she cannot guess (except with negligible 
probability) a string $X'^{\calC_{i}}$ that is close enough to $\widetilde{X}^{\calC_{i}}$. Hence she will not be 
able to successfully use this output of the interactive hashing during the opening phase and will thus be bounded 
to use the other output of the interactive hashing. 
By the bounded storage assumption, the bounded information $f(X)$ stored by Alice is such that $|f(X)| \leq \gamma n$  with
$\gamma < \alpha$. Then, by \lemref{lem:gr},

\begin{displaymath}
  R_{\infty}^{\varepsilon'}(X | f(X)) \geq \alpha - \gamma -
  \frac{1+\log(1/\varepsilon')}{n} = \rho.
\end{displaymath}

Since Bob is honest, $\calB$ is randomly chosen. Lets consider a random 
subset $\widetilde{\calC}$ of $\calB$ such that  $|\widetilde{\calC}|=\ell$. This is an 
$(\mu,\nu,e^{-\ell \nu^2/2})$-averaging sampler for any $\mu,\nu >0$ according 
to \lemref{lem:random-subset}. By setting $\mu=\frac{\rho-2\tau}{\log (1/ \tau)}$, 
$\nu=\frac{\tau}{\log (1/ \tau)}$, we have by \lemref{lem:vadhan_minentropy} that 
\begin{displaymath}
R_{\infty}^{\varepsilon'+\varepsilon''}(X^{\widetilde{\calC}} | \calB, \widetilde{\calC}, f(X)) \geq \rho - 3\tau,
\end{displaymath}
for $\varepsilon''=e^{-\ell \nu^2/2}-2^{-\Omega(\tau n)}$. For 
$\widetilde{\varepsilon}=(\varepsilon'+\varepsilon'')^{1-\zeta}$, let 
$\Bad$ be the set of $\widetilde{\calC}$'s such that $R_{\infty}(X^{\widetilde{\calC}} | \calB, \widetilde{\calC}, f(X))$
is not $\widetilde{\varepsilon}$-close to $(\rho - 3\tau)$-min entropy rate. Due to the above 
equation the density of $\Bad$ is at most $(\varepsilon'+\varepsilon'')^{\zeta}$. Then
the size of the set $T \subset \bits^m$ of strings that maps (using the dense encoding scheme) to subsets in $\Bad$ is at most 
$(\varepsilon'+\varepsilon'')^{\zeta}2^m \leq 2^t$. Hence the properties of the interactive 
hashing protocol guarantee that with overwhelming probability there
will be an $i$ such that
$$R_{\infty}^{\widetilde{\varepsilon}}(X^{\calC_{i}} | \calB, \calC_{i}, f(X), M_{IH}) \geq \rho - 3\tau,$$
where $M_{IH}$ are the messages exchanged during the interactive hashing protocol.

However, if $h(\delta+\xi) < \rho - 3\tau$ and the min-entropy rate is at least $\rho - 3\tau$, then fixing 
$0<\hat{\varepsilon}<\rho - 3\tau-h(\delta+\xi)$,
for large enough $\ell$, the probability that Alice guesses one of the strings $X'^{\calC_{i}}$ that would be accepted by Bob 
as being close enough to $\widetilde{X}^{\calC_{i}}$ is upper bounded by 
$$2^{\left(h(\delta+\xi)-\rho + 3\tau - \hat{\varepsilon}\right)\ell},$$
which is a negligible function of $\ell$.
\end{IEEEproof}

By fixing the parameters as small as possible we have that for large enough $\ell$ the protocol works for values 
$\alpha, \gamma, \delta$ which satisfy $h(\delta) < \alpha - \gamma$.

\section{Oblivious Transfer Protocol}\label{sec:protocol-ot}
 
Our OT protocol imposes a memory bound on Bob. We would like to point out that it is trivial to revert the direction of OT protocols \cite{EC:WolWul06}. 
We first present the intuition behind our protocol before a detailed description. Initially, both parties sample positions from 
the public random source. Then the parties use an interactive hashing protocol (with an associated 
dense encoding) to select two subsets of the positions initially sampled by Alice.
Bob inputs into the interactive hashing protocol one subset for which he has also
sampled the public random source in the same positions. The other subset is out of 
Bob's control due to the properties of the interactive hashing protocol. Finally the positions specified by 
the two subsets are used as input to a fuzzy extractor in order to obtain one-time pads.
Bob sends one bit indicating which input string should be xored with which
one-time pad. The security for Alice is guaranteed by the fact that one of the subsets is
out of Bob's control and will have high min-entropy given his view, thus resulting in a 
good one-time pad. The security for Bob follows from the security of the interactive hashing.
The correctness follows from the correctness of the fuzzy extractor.

The security parameter is $\ell$ and $k$ is set as $k=2\sqrt{\ell n}$  in order to satisfy the requirements of Lemma \ref{lem:birthday}. Fix $\varepsilon',\hat{\varepsilon}>0$ and 
$ \xi>0$ such that $1/4>\delta+\xi>0$
 and 
let $\rho = \alpha - \gamma - \frac{1 + \log(1/\varepsilon')}{n}$ according to the requirements of Lemma \ref{lem:gr}. 
Fix $0<\zeta<1$
and $\tau$  such that $\frac{\rho}{3} \geq \tau > 0$ to satisfy the requirements of Lemma \ref{lem:vadhan_minentropy}. Let $\mu=\frac{\rho-2\tau}{\log (1/ \tau)}$, 
$\nu=\frac{\tau}{\log (1/ \tau)}$ and $\varepsilon''=e^{-\ell \nu^2/2}-2^{-\Omega(\tau n)}$, where the last term comes 
from \lemref{lem:vadhan_minentropy}. Fix $m \geq \ell \left(\log{k} + 1\right)$ and $m - O(\ell) \geq t \geq m - \zeta \log (1/(\varepsilon'+\varepsilon''))$ according to Lemma \ref{lem:interactive-hashing-ding}. 
For $\beta$ depending on $\delta+\xi$ (see comments about the code rate below), let $k_F$ and $m_F$ (the parameters of the fuzzy extractor) be such that $k_F = \rho + \beta - 3\tau-2 m_F - 1 - \frac{1+\log(1/\hat{\varepsilon})}{\ell}$ and $0 < m_F < k_F$. 
The message is $\calV=\bits^{m_F \ell}$.
We assume the following functionalities are available to the parties (see the lemmas in Sections \ref{sec:ext} and \ref{sec:ih}), :

\begin{itemize}
  
\item A pair of functions $\Ext \colon \{0,1\}^\ell \times \{0,1\}^{r} \to 
\{0,1\}^{m_F \ell} \times \{0,1\}^q$ and $\Rec \colon \{0,1\}^\ell \times \{0,1\}^{r} 
\times \{0,1\}^q \to \{0,1\}^{m_F \ell}$ that constitutes an 
$(k_F \ell,\varepsilon_F,\delta+\xi,0)$-fuzzy extractor where
 $q = (1-R) \ell$, $\varepsilon_F$ is an arbitrary number with
$\varepsilon_F>e^{-\ell/2^{O(\log^* \ell)}}$.

\item An $2^{-m}$-uniform $(t,2^{-(m-t)+O(\log m)})$-secure interactive hashing protocol with input domain $\calW = \bits^m$ and an associated dense encoding of subsets $F$
for tuples of size $k$ and subsets of size $\ell$.
\end{itemize}

Recall (\remref{rem:fuzzy-extractor}) that there is a tradeoff between the
fraction of errors $\delta+\xi$ that the fuzzy extractor can tolerate and the
rate $\beta$ of the code used in the construction. The construction given in Theorem
4 of~\cite{guruswami2002near} has linear-time encoding and decoding and achieves
the Zyablov bound \cite{zyablov71}: for given $1 > \beta > 0$ and $\mu > 0$, the code has rate $\beta$ and
  \begin{equation}
    \label{eq:zyablov-distance}
    \delta+\xi \geq \max_{\beta < \widetilde{\beta} < 1} \frac{(1 - \widetilde{\beta} - \mu) y}{2},
  \end{equation}
  where $y$ is the unique number in $[0,1/2]$ with $h(y) = 1-\beta/\widetilde{\beta}$ and $\delta+\xi$ the amount of errors that can be corrected by the code.

Note that in order for $k_F$ to be positive, we need to have $\rho + \beta> 1$;
since $\rho$ approaches $\alpha - \gamma$ from below in the asymptotic limit, we can obtain an
upper bound for $\delta$ by setting $\beta > 1-\alpha+\gamma$ and
$\mu = 0$ in~\eqref{eq:zyablov-distance}.

 Random linear codes achieve a better
bound, namely, the Gilbert-Varshamov bound: for a given relative distance
$\upsilon$ and $\mu > 0$, a random code has (with high probability) rate $\beta \geq 1 - h(\upsilon) -
\mu$. Applying again the constraint that $\rho + \beta > 1$ and that $\rho \to
\alpha - \gamma$ in the asymptotic limit, and using the fact that a code that
can correct $\delta n$ errors has relative distance $\upsilon = 2 \delta + 1/n \to 2
\delta$, this gives an upper bound for $\delta$: we must have
$h(2 \delta) < \alpha - \gamma$. However, as noted
in~\remref{rem:fuzzy-extractor}, the random linear code construction does not
have efficient decoding. It is an open question whether an
efficient construction can achieve better parameters than the one
from~\cite{guruswami2002near}.

\subsection*{Transmission phase:}
\begin{itemize}

\item Alice chooses uniformly $k$ positions from $X$. Similarly, 
Bob samples $k$ positions from $\widetilde{X}$. We call their
sets of positions $\calA$ and $\calB$, respectively.
\end{itemize}

\subsection*{Setup phase:}
\begin{itemize}
\item Alice sends $\calA$ to Bob. Bob computes $\calD = \calA \cap \calB$. If 
$|\calD| < \ell$, Bob aborts. Otherwise, Bob picks a random 
subset $\calC$ of $\calD$ of size $\ell$.

\item Bob computes the encoding $w$ of $\calC$ (as a subset of $\calA$). Alice and 
Bob interactively hash $w$, producing two strings $w_0,w_1$. 
They compute the subsets $\calC_0, \calC_1 \subset A$ that are respectively encoded in 
$w_0, w_1$. If either encoding is invalid, they abort.
\end{itemize}

\subsection*{Transfer phase:}
\begin{itemize}

\item Bob sends $p = c \oplus d$, where $w_d = w$. 
\item For $i \in \{0,1\}$, Alice picks $r_i \getsr \bits^{r}$,
computes $(y_i, q_i) \gets \Ext(x^{\calC_i},r_i)$ and 
$z_i = s_{i \oplus p} \oplus y_i$, and sends $(z_i,r_i,q_i)$
to Bob.
\item Bob computes $y' \gets \Rec(\widetilde{x}^{\calC}, r_{d},
q_{d})$ and outputs $s=y' \oplus z_{d}$.
\end{itemize}

\begin{theorem}
The protocol is $(\lcor,0,\lalice)$-secure for $\lcor$ and $\lalice$ negligible in $\ell$.
\end{theorem}

\begin{IEEEproof}\textbf{Correctness:}
We first analyze the probability of an abort. The protocol aborts if 
either $|\calD| < \ell$, or if one string obtained in the interactive hashing 
protocol is an invalid encoding of subsets of $\calA$. By \lemref{lem:birthday}, 
$\Pr[|\calD| < \ell] < e^{-\ell/4}$. Since $w_d = w$ is the encoding of $\calC$,
one of the two string is always a valid encoding. The other output $W_{1-d}$ is $2^{-m}$-close 
to distributed uniformly over the $2^{-m} - 1$ strings different 
from $w_d$. Since it is a dense encoding, \lemref{lem:dense-encoding} implies
that the probability that it is not a valid encoding is thus less than or equal to
\ifnum\breakequations=1
	\begin{eqnarray*}
		2^{-m} + \frac{\binom{k}{\ell}}{ 2^{m}-1} &\leq& 2^{-m} + 2^{\ell \log k - m+1}\\
		& \leq & 2^{-\ell \log k - \ell} + 2^{-\ell+1}\\
		& \leq & 2^{-\ell+2},
	\end{eqnarray*}
\else
	$$2^{-m} + \frac{\binom{k}{\ell}}{ 2^{m}-1} \leq 2^{-m} + 2^{\ell \log k - m+1} \leq 2^{-\ell \log k - \ell} + 2^{-\ell+1} \leq  2^{-\ell+2},$$ 
\fi
for $m \geq \ell \left(\log k+1\right)$.
If both parties are honest and there is no abort, then $s = s_c$ if and 
only if $\Rec(\widetilde{x}^{\calC}, r_{d},q_{d}) = y_d$. By the properties 
of the employed fuzzy extractor, this last event happens if $\HD(x^\calC,\widetilde{x}^\calC) \leq (\delta + \xi) \ell$. 
By \lemref{lem:alice-bob-random-subset}, $\HD(x^\calC,\widetilde{x}^\calC) > (\delta + \xi) \ell$
with probability at most $e^{-\xi^2 \ell/2}$. Putting everything together
this proves the correctness.

\noindent\textbf{Security for Bob:}
There are two possibilities: either the protocol aborts or not. 
If the protocol aborts in the setup phase, Bob still has not sent 
$p = c \oplus d$, so Alice's view is independent from $C$. On the
other hand, if the protocol does not abort, then $w_{1-d}$ is a valid encoding
of some set $\calC'$. Due to the properties of the interactive hashing
protocol, Alice's view is then consistent with both

\begin{enumerate}
\item Bob choosing $c$ and $\calC$, and
\item Bob choosing $1-c$ and $\calC'$.
\end{enumerate}

Hence Alice's view is independent of $C$. 

\noindent\textbf{Security for Alice:}
There should be an index $i$ (determined at the setup stage) such
that for any two pairs $(s_0,s_1), (s_0',s_1')$ with $s_i = s_{i}'$, Bob's
view of the protocol executed with $(s_0,s_1)$ is close to his view of the
protocol executed with $(s_0',s_1')$.

The proof's strategy is to show that for $i$, $X^{\calC_{1-i}}$ has high enough 
min-entropy, given Bob's view of the protocol, in such a way that 
$Y_{1-i}$ is indistinguishable from an uniform distribution. Indistinguishability 
of Bob's views will then follow.

By the bounded storage assumption, $|\widetilde{f}(\widetilde{X})| \leq \gamma n$ with 
$\gamma < \alpha$. Then, by \lemref{lem:gr},

\begin{displaymath}
  R_{\infty}^{\varepsilon'}(X | \widetilde{f}(\widetilde{X})) \geq \alpha - \gamma -
  \frac{1+\log(1/\varepsilon')}{n} = \rho.
\end{displaymath}

Since Alice is honest, $\calA$ is randomly chosen. Lets consider a random 
subset $\widetilde{\calC}$ of $\calA$ such that  $|\widetilde{\calC}|=\ell$. This is an 
$(\mu,\nu,e^{-\ell \nu^2/2})$-averaging sampler for any $\mu,\nu >0$ according 
to \lemref{lem:random-subset}. By setting $\mu=\frac{\rho-2\tau}{\log (1/ \tau)}$, 
$\nu=\frac{\tau}{\log (1/ \tau)}$, we have by \lemref{lem:vadhan_minentropy} that 
\begin{displaymath}
R_{\infty}^{\varepsilon'+\varepsilon''}(X^{\widetilde{\calC}} | \calA, \widetilde{\calC}, \widetilde{f}(\widetilde{X})) \geq \rho - 3\tau,
\end{displaymath}
for $\varepsilon''=e^{-\ell \nu^2/2}-2^{-\Omega(\tau n)}$. For 
$\widetilde{\varepsilon}=(\varepsilon'+\varepsilon'')^{1-\zeta}$, let 
$\Bad$ be the set of $\widetilde{\calC}$'s such that $R_{\infty}(X^{\widetilde{\calC}} | \calA, \widetilde{\calC}, \widetilde{f}(\widetilde{X}))$
is not $\widetilde{\varepsilon}$-close to $(\rho - 3\tau)$-min entropy rate. Due to the above 
equation the density of $\Bad$ is at most $(\varepsilon'+\varepsilon'')^{\zeta}$. Then
the size of the set $T \subset \bits^m$ of strings that maps (using the dense encoding scheme) to subsets in $\Bad$ is at most 
$(\varepsilon'+\varepsilon'')^{\zeta}2^m \leq 2^t$. Hence the properties of the interactive 
hashing protocol guarantee that with overwhelming probability there
will be an $i$ such that
$$R_{\infty}^{\widetilde{\varepsilon}}(X^{\calC_{1-i}} | \calA, \calC_{1-i}, \widetilde{f}(\widetilde{X}), M_{IH}) \geq \rho - 3\tau,$$
where $M_{IH}$ are the messages exchanged during the interactive hashing protocol.
We now show that $X^{\calC_{1-i}}$ has high min-entropy even when given
$Z_{i},Y_{i},Q_i$. We can see $(Z_{i},Y_{i},Q_i)$ as a random variable over
$\{0,1\}^{(2 m_F + 1-\beta) \ell}$. Then, by \lemref{lem:gr},
\ifnum\breakequations=1
	\begin{eqnarray*}
	&& R_{\infty}^{\hat{\varepsilon}+\sqrt{8\widetilde{\varepsilon}}}(X^{\calC_{1-i}} | \calA, \calC_{1-i}, \widetilde{f}(\widetilde{X}), M_{IH},Z_{i},Y_{i},Q_i) 
	\geq \\ 
	&& \qquad \geq \rho + \beta - 3\tau-2 m_F - 1 - \frac{1+\log(1/\hat{\varepsilon})}{\ell} = k_F.
	\end{eqnarray*}
\else
	$$R_{\infty}^{\hat{\varepsilon}+\sqrt{8\widetilde{\varepsilon}}}(X^{\calC_{1-i}} | \calA, \calC_{1-i}, \widetilde{f}(\widetilde{X}), M_{IH},Z_{i},Y_{i},Q_i) 
	\geq \rho + \beta - 3\tau-2 m_F - 1 - \frac{1+\log(1/\hat{\varepsilon})}{\ell} = k_F.$$
\fi

Thus setting $\varepsilon'$ and $\hat{\varepsilon}$ to be negligible in $\ell$, the use of the 
$(k_F \ell,\varepsilon_F,\delta+\xi,0)$-fuzzy extractor to obtain $y_i$ that 
is used as an one-time pad guarantees that only negligible information about 
$s_{i \oplus e}$ can be leaked and that the protocol is $\lalice$-secure 
for Alice for negligible $\lalice$.
\end{IEEEproof}

\section{Discussion}
\label{sec:discussion2}

In this section we briefly discuss the protocols we obtained in terms of their robustness against noise. 

For the case of oblivious transfer, our best protocol works for levels of noise such that $h(2\delta)<\alpha-\gamma.$ Putting $\alpha=1$ and $\gamma=0.5$ (that means, a public string that is perfectly random and the bound on the memory equal to half the length of the publicly available string), we have that oblivious transfer is possible if $\delta<0.055$. Figure \ref{fig:OT} presents the maximum supported values of noise for $\alpha-\gamma$ ranging from $0$ to $1$. 

\begin{figure}
  \centering
  \includegraphics[scale=0.7]{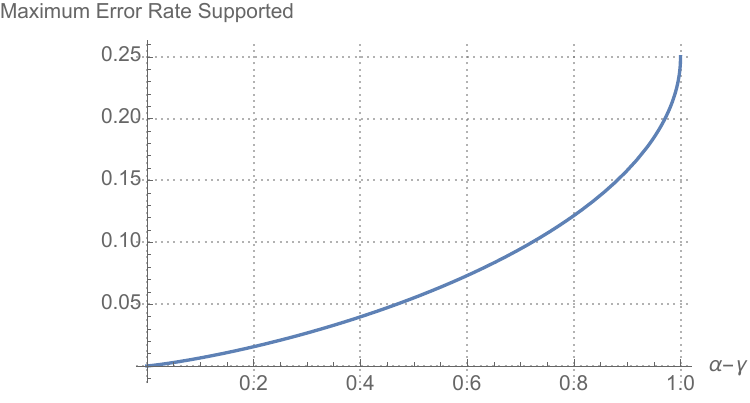}
  \caption{Acceptable levels of noise as a function of $\alpha-\gamma$  for the oblivious transfer protocol}
  \label{fig:OT}
\end{figure}

Our non-interactive commitment protocol works for $h(\delta)<(\alpha-\gamma)/2.$ For $\alpha=1$ and $\gamma=0.5$ we have that non-interactive commitments are possible in the noisy memory bounded model if $\delta < 0.041$. Figure \ref{fig:NICP} presents the maximum supported values of noise for $\alpha-\gamma$ ranging from $0$ to $1$. 

\begin{figure}
  \centering
  \includegraphics[scale=0.7]{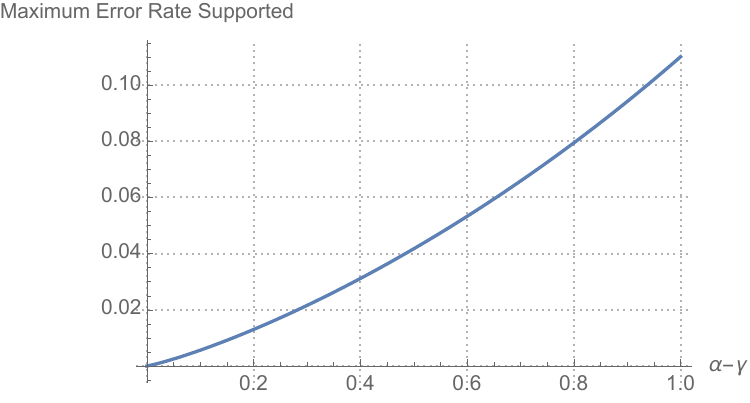}
  \caption{Acceptable levels of noise as a function of $\alpha-\gamma$  for the non-interactive commitment protocol}
  \label{fig:NICP}
\end{figure}

Finally, interactive commitments are possible if $h(\delta)<\alpha-\gamma.$ For the same settings ($\alpha=1$ and $\gamma=0.5$), this gives us a maximum noise rate of $\delta<0.11$. Figure \ref{fig:INCP} presents the maximum supported values of noise for $\alpha-\gamma$ ranging from $0$ to $1$. 

\begin{figure}
  \centering
  \includegraphics[scale=0.7]{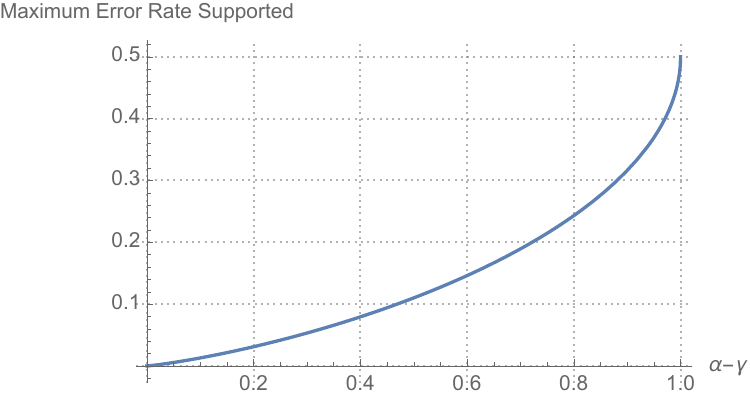}
  \caption{Acceptable levels of noise as a function of $\alpha-\gamma$ for the interactive commitment protocols}
  \label{fig:INCP}
\end{figure}

\section{Conclusion}
\label{sec:discussion}

In this work we presented the first protocols for commitment and oblivious 
transfer in the bounded storage model with errors, thus extending the previous results existing in the literature for key agreement \cite{TCC:Ding05}. 
As expected, our protocols work for a limited range of values of the noise
parameter $\delta$. The allowed range for our commitment schemes is 
different than the one for the OT protocol. For the case of commitment schemes, the range of noise that could be tolerated depended on the round complexity of the proposed protocols: extra rounds helped tolerating a more severe noise. 

There are many open questions that follow our results here: 

\begin{itemize}
\item To prove the impossibility of commitment protocols when $h(\delta) \geq  \alpha -\gamma$. 
\item To obtain efficient OT protocols that work for the range of noise achieved by our protocols based on random linear codes. 
\item What is the best range of noise that can be achieved by non-interactive commitment protocols?  
\item Is there an intrinsic difference in the level of noise tolerated by bit commitment and OT protocols?
\end{itemize}

We do conjecture that there exists an intrinsic difference between OT and commitment schemes in the sense that there exist levels of noise so that one of them is possible but not the other. If this conjecture is proven, this would sharply contrast with the noise-free bounded memory model, where there is an all-or-nothing situation: either one has OT and bit commitment or one has nothing. Our main argument in support of this conjecture is the need for error correction in the case of OT protocols in the bounded storage model. In the case of commitment protocols error correction is not needed, the main tool used to prevent Alice from cheating is a typicality test.

\def\shortbib{0}

\subsection{Smooth Entropies and their Applications in Cryptography}\label{sec:smo}

We recall that the conditional min-entropy is defined as 
  
  \begin{align*}
    H_{\infty}(X | Y) &= \min_{y \in \mathcal{Y}} \min_{x \in \mathcal{X}}
    (-\log P_{X | Y=y}(x))
  \end{align*}
for a probability mass function $P_{XY}$ over $\mathcal{X} \times
  \mathcal{Y}$. 
  
  One faces some difficulties when trying to characterize the amount of randomness one can extract from $X$ given $Y$ based on this classical definition of conditional min-entropy. The first problem is that min-entropy can be highly sensitive (can change drastically) due to events that occur with a very small probability. For example, assume that $\varepsilon$ is a negligible parameter. Consider random variables $X$ over $\calX=\{1,\ldots,2^\ell\}$ and $Y$ over 
$\calY = \{0,1\}$. Intuitively $Y$ can express the occurrence of some bad event. Let $P_{X|Y}(x|y=0)=1/|\calX|$ for all $x \in \calX$, and $P_{X|Y}(x=1|y=1)=1$. On the one hand, if $P_{Y}(y=1)=0$, then $H_{\infty}(X|Y) = \ell$. On the other hand, if $P_{Y}(y=1)=\varepsilon$ is non-zero (even if it is very small value), then $H_{\infty}(X|Y)=0$. 

A second problem is that the usual chain rule for entropy, which holds for the conditional Shannon entropy ($H(X|Y)=H(XY)-H(X)$), does not hold anymore for the conditional min-entropy as here defined. This makes the use of the traditional conditional min-entropy quite limited in the context of randomness extraction and its applications to cryptography. The same problems are also found with the similarly defined max-entropy $H_{0}(X|Y)$. 

Smooth entropies solve these problems by ``smoothing out'' events that have a small probability. We recall the definition of conditional smooth min-entropy.  For $\varepsilon > 0$, the 
\textnormal{$\varepsilon$-smooth min-entropy of $X$ given $Y$} is 

\begin{displaymath}
H_\infty^\varepsilon(X|Y)=\max\limits_{X'Y':\|P_{X'Y'}-P_{XY}\|
\leq\varepsilon}H_\infty(X'|Y').
\end{displaymath}
for probability mass functions $P_{XY}$ and $P_{X'Y'}$. Thus, $H_\infty^\varepsilon(X|Y)$ is maximized over all probability distributions that are $\varepsilon$ distant from the original distribution $P_{XY}$. Going back to our original example where $P_{X|Y}(x|y=0)=1/|\calX|$  and $P_{Y}(y=1)=\varepsilon>0$, we have that $H_\infty^\varepsilon(X|Y)= \ell$. $H_\infty^\varepsilon(X|Y)$ is ``robust'' to any event happening with probability at most $\varepsilon$. 

The smooth min-entropy also has the nice property of having  approximate chain rules (that depend on $\varepsilon$). 

\begin{lemma}[\cite{AC:RenWol05}]
  \label{lem:chain}
  Let $\varepsilon,\varepsilon',\varepsilon'' > 0$ and $P_{XYZ}$ be a tripartite probability
  mass function. Then
  \ifnum\breakequations=1
  	\begin{align*}
    		H_{\infty}^{\varepsilon + \varepsilon'}(X, Y | Z) &\geq
    		H_{\infty}^{\varepsilon}(X | Y, Z) + H_{\infty}^{\varepsilon'}(Y | Z) \text{ and}\\
    		H_{\infty}^{\varepsilon}(X, Y | Z) &<
    		H_{\infty}^{\varepsilon+\varepsilon' + \varepsilon''}(X | Y, Z) +
    		H_0^{\varepsilon''}(Y | Z) \\ 
    		& \qquad + \log(1/\varepsilon'). 
  	\end{align*}
   \else
   	\begin{align*}
    		H_{\infty}^{\varepsilon + \varepsilon'}(X, Y | Z) &\geq
    		H_{\infty}^{\varepsilon}(X | Y, Z) + H_{\infty}^{\varepsilon'}(Y | Z) \text{ and}\\
    		H_{\infty}^{\varepsilon}(X, Y | Z) &<
    		H_{\infty}^{\varepsilon+\varepsilon' + \varepsilon''}(X | Y, Z) +
    		H_0^{\varepsilon''}(Y | Z) 
    		+ \log(1/\varepsilon'). 
  	\end{align*}
   \fi
\end{lemma}

It is shown in \cite{AC:RenWol05} that the conditional smooth min-entropy $H_\infty^\varepsilon(X|Y)$ optimally characterizes the amount of randomness that one can extract from $X$ when $Y$ is given as side information.  

The $\varepsilon$-smooth variants  directly allow to capture bad events that happens with probability at most $\varepsilon$. Another possible alternative would be to use ``average min-entropy'', in which the min-entropy is averaged over the random variable $Y$ \cite{EC:DodReySmi04}. In this way, small probability events are naturally taken care of in the average computation. Moreover, one can also show that average min-entropy also satisfies an approximate chain rule \cite{EC:DodReySmi04}. Smooth min-entropy and average min-entropy are asymptotically identical when applied to independent and identically distributed distributions.


\begin{thebibliography}{10}
\providecommand{\url}[1]{#1}
\csname url@samestyle\endcsname
\providecommand{\newblock}{\relax}
\providecommand{\bibinfo}[2]{#2}
\providecommand{\BIBentrySTDinterwordspacing}{\spaceskip=0pt\relax}
\providecommand{\BIBentryALTinterwordstretchfactor}{4}
\providecommand{\BIBentryALTinterwordspacing}{\spaceskip=\fontdimen2\font plus
\BIBentryALTinterwordstretchfactor\fontdimen3\font minus
  \fontdimen4\font\relax}
\providecommand{\BIBforeignlanguage}[2]{{%
\expandafter\ifx\csname l@#1\endcsname\relax
\typeout{** WARNING: IEEEtran.bst: No hyphenation pattern has been}%
\typeout{** loaded for the language `#1'. Using the pattern for}%
\typeout{** the default language instead.}%
\else
\language=\csname l@#1\endcsname
\fi
#2}}
\providecommand{\BIBdecl}{\relax}
\BIBdecl

\bibitem{ISIT:DowLacNas14}
R.~Dowsley, F.~Lacerda, and A.~C.~A. Nascimento, ``Oblivious transfer in the
  bounded storage model with errors,'' in \emph{Information Theory (ISIT), 2014
  IEEE International Symposium on}, June 2014, pp. 1623--1627.

\bibitem{C:FiaSha86}
A.~Fiat and A.~Shamir, ``How to prove yourself: {Practical} solutions to
  identification and signature problems,'' in \emph{Advances in Cryptology --
  {CRYPTO}'86}, ser. Lecture Notes in Computer Science, A.~M. Odlyzko, Ed.,
  vol. 263.\hskip 1em plus 0.5em minus 0.4em\relax Springer, Aug. 1987, pp.
  186--194.

\bibitem{EveGolLem:1985}
S.~Even, O.~Goldreich, and A.~Lempel, ``A randomized protocol for signing
  contracts,'' \emph{Commun. ACM}, vol.~28, no.~6, pp. 637--647, Jun. 1985.

\bibitem{GolMicWig91}
O.~Goldreich, S.~Micali, and A.~Wigderson, ``Proofs that yield nothing but
  their validity or all languages in {NP} have zero-knowledge proof systems,''
  \emph{Journal of the {ACM}}, vol.~38, no.~3, pp. 691--729, 1991.

\bibitem{Blum:1983}
M.~Blum, ``Coin flipping by telephone a protocol for solving impossible
  problems,'' \emph{SIGACT News}, vol.~15, no.~1, pp. 23--27, Jan. 1983.

\bibitem{STOC:GolMicWig87}
O.~Goldreich, S.~Micali, and A.~Wigderson, ``How to play any mental game or {A}
  completeness theorem for protocols with honest majority,'' in \emph{19th
  Annual {ACM} Symposium on Theory of Computing}, A.~Aho, Ed.\hskip 1em plus
  0.5em minus 0.4em\relax {ACM} Press, May 1987, pp. 218--229.

\bibitem{C:ChaDamVan87}
D.~Chaum, I.~Damg{\aa}rd, and J.~{van de Graaf}, ``Multiparty computations
  ensuring privacy of each party's input and correctness of the result,'' in
  \emph{Advances in Cryptology -- {CRYPTO}'87}, ser. Lecture Notes in Computer
  Science, C.~Pomerance, Ed., vol. 293.\hskip 1em plus 0.5em minus 0.4em\relax
  Springer, Aug. 1988, pp. 87--119.

\bibitem{STOC:Kilian88}
J.~Kilian, ``Founding cryptography on oblivious transfer,'' in \emph{20th
  Annual {ACM} Symposium on Theory of Computing}.\hskip 1em plus 0.5em minus
  0.4em\relax {ACM} Press, May 1988, pp. 20--31.

\bibitem{C:IshPraSah08}
Y.~Ishai, M.~Prabhakaran, and A.~Sahai, ``Founding cryptography on oblivious
  transfer - efficiently,'' in \emph{Advances in Cryptology -- {CRYPTO}~2008},
  ser. Lecture Notes in Computer Science, D.~Wagner, Ed., vol. 5157.\hskip 1em
  plus 0.5em minus 0.4em\relax Springer, Aug. 2008, pp. 572--591.

\bibitem{Mayers:1997}
D.~Mayers, ``Unconditionally secure quantum bit commitment is impossible,''
  \emph{Physical review letters}, vol.~78, no.~17, pp. 3414--3417, 1997.

\bibitem{JC:Naor91}
M.~Naor, ``Bit commitment using pseudorandomness,'' \emph{Journal of
  Cryptology}, vol.~4, no.~2, pp. 151--158, 1991.

\bibitem{C:Even81}
S.~Even, ``Protocol for signing contracts,'' in \emph{Advances in Cryptology --
  {CRYPTO}'81}, A.~Gersho, Ed., vol. ECE Report 82-04.\hskip 1em plus 0.5em
  minus 0.4em\relax U.C. Santa Barbara, Dept. of Elec. and Computer Eng., 1981,
  pp. 148--153.

\bibitem{C:Pedersen91}
T.~P. Pedersen, ``Non-interactive and information-theoretic secure verifiable
  secret sharing,'' in \emph{Advances in Cryptology -- {CRYPTO}'91}, ser.
  Lecture Notes in Computer Science, J.~Feigenbaum, Ed., vol. 576.\hskip 1em
  plus 0.5em minus 0.4em\relax Springer, Aug. 1992, pp. 129--140.

\bibitem{TCC:Haitner04}
I.~Haitner, ``Implementing oblivious transfer using collection of dense
  trapdoor permutations,'' in \emph{TCC~2004: 1st Theory of Cryptography
  Conference}, ser. Lecture Notes in Computer Science, M.~Naor, Ed., vol.
  2951.\hskip 1em plus 0.5em minus 0.4em\relax Springer, Feb. 2004, pp.
  394--409.

\bibitem{Rabin:1981}
M.~O. Rabin, ``How to exchange secrets by oblivious transfer,'' Technical
  Report TR-81, Harvard Aiken Computation Laboratory, Tech. Rep., 1981.

\bibitem{C:BelMic89}
M.~Bellare and S.~Micali, ``Non-interactive oblivious transfer and
  spplications,'' in \emph{Advances in Cryptology -- {CRYPTO}'89}, ser. Lecture
  Notes in Computer Science, G.~Brassard, Ed., vol. 435.\hskip 1em plus 0.5em
  minus 0.4em\relax Springer, Aug. 1990, pp. 547--557.

\bibitem{EC:Kalai05}
Y.~T. Kalai, ``Smooth projective hashing and two-message oblivious transfer,''
  in \emph{Advances in Cryptology -- {EUROCRYPT}~2005}, ser. Lecture Notes in
  Computer Science, R.~Cramer, Ed., vol. 3494.\hskip 1em plus 0.5em minus
  0.4em\relax Springer, May 2005, pp. 78--95.

\bibitem{C:PeiVaiWat08}
C.~Peikert, V.~Vaikuntanathan, and B.~Waters, ``A framework for efficient and
  composable oblivious transfer,'' in \emph{Advances in Cryptology --
  {CRYPTO}~2008}, ser. Lecture Notes in Computer Science, D.~Wagner, Ed., vol.
  5157.\hskip 1em plus 0.5em minus 0.4em\relax Springer, Aug. 2008, pp.
  554--571.

\bibitem{ICITS:DGMN08}
R.~Dowsley, J.~van~de Graaf, J.~{M{\"u}ller-Quade}, and A.~C.~A. Nascimento,
  ``Oblivious transfer based on the {McEliece} assumptions,'' in \emph{ICITS
  08: 3rd International Conference on Information Theoretic Security}, ser.
  Lecture Notes in Computer Science, R.~Safavi-Naini, Ed., vol. 5155.\hskip 1em
  plus 0.5em minus 0.4em\relax Springer, Aug. 2008, pp. 107--117.

\bibitem{IEICE:DGMN12}
R.~Dowsley, J.~van~de Graaf, J.~M{\"u}ller-Quade, and A.~C.~A. Nascimento,
  ``Oblivious transfer based on the {McEliece} assumptions,'' \emph{IEICE
  Transactions on Fundamentals of Electronics, Communications and Computer
  Sciences}, vol. E95-A, no.~2, pp. 567--575, 2012.

\bibitem{CANS:DavDowNas14}
B.~David, R.~Dowsley, and A.~C.~A. Nascimento, ``Universally composable
  oblivious transfer based on a variant of {LPN},'' in \emph{CANS 14: 13th
  International Conference on Cryptology and Network Security}, ser. Lecture
  Notes in Computer Science, D.~Gritzalis, A.~Kiayias, and I.~G. Askoxylakis,
  Eds., vol. 8813.\hskip 1em plus 0.5em minus 0.4em\relax Springer, Oct. 2014,
  pp. 143--158.

\bibitem{BDDMN17}
P.~S. L.~M. Barreto, B.~David, R.~Dowsley, K.~Morozov, and A.~C.~A. Nascimento,
  ``A framework for efficient adaptively secure composable oblivious transfer
  in the rom,'' Cryptology ePrint Archive, Report 2017/993, 2017,
  \url{http://eprint.iacr.org/2017/993}.

\bibitem{EC:DamKilSal99}
I.~Damg{\aa}rd, J.~Kilian, and L.~Salvail, ``On the (im)possibility of basing
  oblivious transfer and bit commitment on weakened security assumptions,'' in
  \emph{Advances in Cryptology -- {EUROCRYPT}'99}, ser. Lecture Notes in
  Computer Science, J.~Stern, Ed., vol. 1592.\hskip 1em plus 0.5em minus
  0.4em\relax Springer, May 1999, pp. 56--73.

\bibitem{WinNasIma:2003}
A.~Winter, A.~C.~A. Nascimento, and H.~Imai, ``Commitment capacity of discrete
  memoryless channels,'' in \emph{IMA Int. Conf.}, ser. Lecture Notes in
  Computer Science, K.~G. Paterson, Ed., vol. 2898.\hskip 1em plus 0.5em minus
  0.4em\relax Springer, 2003, pp. 35--51.

\bibitem{ImaMorNas:2006}
H.~Imai, K.~Morozov, and A.~C.~A. Nascimento, ``On the oblivious transfer
  capacity of the erasure channel,'' in \emph{Information Theory (ISIT), 2006
  IEEE International Symposium on}, Jul. 2006, pp. 1428--1431.

\bibitem{AhlCsi:2007}
R.~Ahlswede and I.~Csisz\'{a}r, ``On oblivious transfer capacity,'' in
  \emph{Information Theory (ISIT), 2007 IEEE International Symposium on}, Jun.
  2007, pp. 2061--2064.

\bibitem{NasWin:2008}
A.~C.~A. Nascimento and A.~Winter, ``On the oblivious-transfer capacity of
  noisy resources,'' \emph{Information Theory, IEEE Transactions on}, vol.~54,
  no.~6, pp. 2572--2581, 2008.

\bibitem{PDMN:11}
A.~C.~B. Pinto, R.~Dowsley, K.~Morozov, and A.~C.~A. Nascimento, ``Achieving
  oblivious transfer capacity of generalized erasure channels in the malicious
  model,'' \emph{Information Theory, IEEE Transactions on}, vol.~57, no.~8, pp.
  5566--5571, 2011.

\bibitem{DowNas14}
R.~Dowsley and A.~C.~A. Nascimento, ``On the oblivious transfer capacity of
  generalized erasure channels against malicious adversaries: The case of low
  erasure probability,'' \emph{IEEE Transactions on Information Theory},
  vol.~63, no.~10, pp. 6819--6826, Oct 2017.

\bibitem{CreKil:1988}
C.~Cr\'{e}peau and J.~Kilian, ``Achieving oblivious transfer using weakened
  security assumptions,'' in \emph{Foundations of Computer Science, 1988., 29th
  Annual Symposium on}, 1988, pp. 42--52.

\bibitem{EC:Crepeau97}
C.~Cr{\'e}peau, ``Efficient cryptographic protocols based on noisy channels,''
  in \emph{Advances in Cryptology -- {EUROCRYPT}'97}, ser. Lecture Notes in
  Computer Science, W.~Fumy, Ed., vol. 1233.\hskip 1em plus 0.5em minus
  0.4em\relax Springer, May 1997, pp. 306--317.

\bibitem{SteWol:2002}
D.~Stebila and S.~Wolf, ``Efficient oblivious transfer from any non-trivial
  binary-symmetric channel,'' in \emph{Information Theory (ISIT), 2002 IEEE
  International Symposium on}, Jun. 2002, p. 293.

\bibitem{SCN:CreMorWol04}
C.~Cr{\'e}peau, K.~Morozov, and S.~Wolf, ``Efficient unconditional oblivious
  transfer from almost any noisy channel,'' in \emph{SCN 04: 4th International
  Conference on Security in Communication Networks}, ser. Lecture Notes in
  Computer Science, C.~Blundo and S.~Cimato, Eds., vol. 3352.\hskip 1em plus
  0.5em minus 0.4em\relax Springer, Sep. 2005, pp. 47--59.

\bibitem{JC:Maurer92a}
U.~M. Maurer, ``Conditionally-perfect secrecy and a provably-secure randomized
  cipher,'' \emph{Journal of Cryptology}, vol.~5, no.~1, pp. 53--66, 1992.

\bibitem{C:CacMau97}
C.~Cachin and U.~M. Maurer, ``Unconditional security against memory-bounded
  adversaries,'' in \emph{Advances in Cryptology -- {CRYPTO}'97}, ser. Lecture
  Notes in Computer Science, B.~S. {Kaliski Jr.}, Ed., vol. 1294.\hskip 1em
  plus 0.5em minus 0.4em\relax Springer, Aug. 1997, pp. 292--306.

\bibitem{DM08}
S.~Dziembowski and U.~M. Maurer, ``The bare bounded-storage model: The tight
  bound on the storage requirement for key agreement,'' \emph{IEEE Transactions
  on Information Theory}, vol.~54, no.~6, pp. 2790--2792, 2008.

\bibitem{FOCS:CacCreMar98}
C.~Cachin, C.~Cr{\'e}peau, and J.~Marcil, ``Oblivious transfer with a
  memory-bounded receiver,'' in \emph{39th Annual Symposium on Foundations of
  Computer Science}.\hskip 1em plus 0.5em minus 0.4em\relax {IEEE} Computer
  Society Press, Nov. 1998, pp. 493--502.

\bibitem{C:Ding01}
Y.~Z. Ding, ``Oblivious transfer in the bounded storage model,'' in
  \emph{Advances in Cryptology -- {CRYPTO}~2001}, ser. Lecture Notes in
  Computer Science, J.~Kilian, Ed., vol. 2139.\hskip 1em plus 0.5em minus
  0.4em\relax Springer, Aug. 2001, pp. 155--170.

\bibitem{AC:HonChaRyu02}
D.~Hong, K.-Y. Chang, and H.~Ryu, ``Efficient oblivious transfer in the
  bounded-storage model,'' in \emph{Advances in Cryptology --
  {ASIACRYPT}~2002}, ser. Lecture Notes in Computer Science, Y.~Zheng, Ed.,
  vol. 2501.\hskip 1em plus 0.5em minus 0.4em\relax Springer, Dec. 2002, pp.
  143--159.

\bibitem{TCC:DHRS04}
Y.~Z. Ding, D.~Harnik, A.~Rosen, and R.~Shaltiel, ``Constant-round oblivious
  transfer in the bounded storage model,'' in \emph{TCC~2004: 1st Theory of
  Cryptography Conference}, ser. Lecture Notes in Computer Science, M.~Naor,
  Ed., vol. 2951.\hskip 1em plus 0.5em minus 0.4em\relax Springer, Feb. 2004,
  pp. 446--472.

\bibitem{IMA:ShiYam11}
J.~Shikata and D.~Yamanaka, ``Bit commitment in the bounded storage model:
  Tight bound and simple optimal construction,'' in \emph{13th IMA
  International Conference on Cryptography and Coding}, ser. Lecture Notes in
  Computer Science, L.~Chen, Ed., vol. 7089.\hskip 1em plus 0.5em minus
  0.4em\relax Springer, Dec. 2011, pp. 112--131.

\bibitem{Alves:2010}
V.~M. Alves, ``{Protocolo de comprometimento de bit eficiente com seguran\c{c}a
  sequencial baseado no modelo de mem\'{o}ria limitada},'' Master's thesis,
  Universidade de Bras\'{i}lia, 2010.

\bibitem{Ding01}
Y.~Z. Ding, ``Provable everlasting security in the bounded storage model,''
  Ph.D. dissertation, Harvard University, Cambridge, MA, USA, 2001, aAI3011357.

\bibitem{TCC:Ding05}
------, ``Error correction in the bounded storage model,'' in \emph{TCC~2005:
  2nd Theory of Cryptography Conference}, ser. Lecture Notes in Computer
  Science, J.~Kilian, Ed., vol. 3378.\hskip 1em plus 0.5em minus 0.4em\relax
  Springer, Feb. 2005, pp. 578--599.

\bibitem{EC:DodReySmi04}
Y.~Dodis, L.~Reyzin, and A.~Smith, ``Fuzzy extractors: How to generate strong
  keys from biometrics and other noisy data,'' in \emph{Advances in Cryptology
  -- {EUROCRYPT}~2004}, ser. Lecture Notes in Computer Science, C.~Cachin and
  J.~Camenisch, Eds., vol. 3027.\hskip 1em plus 0.5em minus 0.4em\relax
  Springer, May 2004, pp. 523--540.

\bibitem{guruswami2002near}
V.~Guruswami and P.~Indyk, ``Near-optimal linear-time codes for unique decoding
  and new list-decodable codes over smaller alphabets,'' in \emph{Proceedings
  of the thiry-fourth annual ACM symposium on Theory of computing}.\hskip 1em
  plus 0.5em minus 0.4em\relax ACM, 2002, pp. 812--821.

\bibitem{zyablov71}
V.~V. Zyablov, ``An estimate of the complexity of constructing binary linear
  cascade codes,'' \emph{Problemy Peredachi Informatsii}, vol.~7, no.~1, pp.
  5--13, 1971.

\bibitem{AC:RenWol05}
R.~Renner and S.~Wolf, ``Simple and tight bounds for information reconciliation
  and privacy amplification,'' in \emph{Advances in Cryptology --
  {ASIACRYPT}~2005}, ser. Lecture Notes in Computer Science, B.~K. Roy, Ed.,
  vol. 3788.\hskip 1em plus 0.5em minus 0.4em\relax Springer, Dec. 2005, pp.
  199--216.

\bibitem{JC:DHRS07}
Y.~Z. Ding, D.~Harnik, A.~Rosen, and R.~Shaltiel, ``Constant-round oblivious
  transfer in the bounded storage model,'' \emph{Journal of Cryptology},
  vol.~20, no.~2, pp. 165--202, Apr. 2007.

\bibitem{JC:Vadhan04}
S.~P. Vadhan, ``Constructing locally computable extractors and cryptosystems in
  the bounded-storage model,'' \emph{Journal of Cryptology}, vol.~17, no.~1,
  pp. 43--77, Jan. 2004.

\bibitem{FOCS:BelRom94}
M.~Bellare and J.~Rompel, ``Randomness-efficient oblivious sampling,'' in
  \emph{35th Annual Symposium on Foundations of Computer Science}.\hskip 1em
  plus 0.5em minus 0.4em\relax {IEEE} Computer Society Press, Nov. 1994, pp.
  276--287.

\bibitem{IPL:CanEveGol95}
R.~Canetti, G.~Even, and O.~Goldreich, ``Lower bounds for sampling algorithms
  for estimating the average,'' \emph{Information Processing Letters}, vol.~53,
  no.~1, pp. 17--25, 13~Jan. 1995.

\bibitem{Zuckerman97}
D.~Zuckerman, ``Randomness-optimal oblivious sampling,'' \emph{Random
  Structures \& Algorithms}, vol.~11, no.~4, pp. 345--367, 1997.

\bibitem{BabHay:2005}
L.~Babai and T.~P. Hayes, ``Near-independence of permutations and an almost
  sure polynomial bound on the diameter of the symmetric group,'' in
  \emph{Proceedings of the sixteenth annual ACM-SIAM symposium on Discrete
  algorithms}, ser. SODA '05.\hskip 1em plus 0.5em minus 0.4em\relax
  Philadelphia, PA, USA: Society for Industrial and Applied Mathematics, 2005,
  pp. 1057--1066.

\bibitem{C:HolRen05}
T.~Holenstein and R.~Renner, ``One-way secret-key agreement and applications to
  circuit polarization and immunization of public-key encryption,'' in
  \emph{Advances in Cryptology -- {CRYPTO}~2005}, ser. Lecture Notes in
  Computer Science, V.~Shoup, Ed., vol. 3621.\hskip 1em plus 0.5em minus
  0.4em\relax Springer, Aug. 2005, pp. 478--493.

\bibitem{EC:KanRey09}
B.~Kanukurthi and L.~Reyzin, ``Key agreement from close secrets over unsecured
  channels,'' in \emph{Advances in Cryptology -- {EUROCRYPT}~2009}, ser.
  Lecture Notes in Computer Science, A.~Joux, Ed., vol. 5479.\hskip 1em plus
  0.5em minus 0.4em\relax Springer, Apr. 2009, pp. 206--223.

\bibitem{EC:RenWol04}
R.~Renner and S.~Wolf, ``The exact price for unconditionally secure asymmetric
  cryptography,'' in \emph{Advances in Cryptology -- {EUROCRYPT}~2004}, ser.
  Lecture Notes in Computer Science, C.~Cachin and J.~Camenisch, Eds., vol.
  3027.\hskip 1em plus 0.5em minus 0.4em\relax Springer, May 2004, pp.
  109--125.

\bibitem{varshamov1957estimate}
R.~Varshamov, ``Estimate of the number of signals in error correcting codes,''
  \emph{Dokl. Akad. Nauk SSSR}, vol. 117, no.~5, pp. 739--741, 1957.

\bibitem{OstVenYun:1993}
R.~Ostrovsky, R.~Venkatesan, and M.~Yung, ``\BIBforeignlanguage{English}{Fair
  games against an all-powerful adversary},'' in
  \emph{\BIBforeignlanguage{English}{Sequences II}}, R.~Capocelli, A.~Santis,
  and U.~Vaccaro, Eds.\hskip 1em plus 0.5em minus 0.4em\relax Springer New
  York, 1993, pp. 418--429.

\bibitem{EC:CreSav06}
C.~Cr{\'e}peau and G.~Savvides, ``Optimal reductions between oblivious
  transfers using interactive hashing,'' in \emph{Advances in Cryptology --
  {EUROCRYPT}~2006}, ser. Lecture Notes in Computer Science, S.~Vaudenay, Ed.,
  vol. 4004.\hskip 1em plus 0.5em minus 0.4em\relax Springer, May~/~Jun. 2006,
  pp. 201--221.

\bibitem{Savvides:2007}
G.~Savvides, ``{Interactive Hashing} and reductions between {Oblivious
  Transfer} variants,'' Ph.D. dissertation, McGill University, 2007.

\bibitem{JC:NOVY98}
M.~Naor, R.~Ostrovsky, R.~Venkatesan, and M.~Yung, ``Perfect zero-knowledge
  arguments for {NP} using any one-way permutation,'' \emph{Journal of
  Cryptology}, vol.~11, no.~2, pp. 87--108, 1998.

\bibitem{CarWeg79}
J.~L. Carter and M.~N. Wegman, ``Universal classes of hash functions,''
  \emph{Journal of Computer and System Sciences}, vol.~18, no.~2, pp. 143 --
  154, 1979.

\bibitem{alon2004probabilistic}
N.~Alon and J.~H. Spencer, \emph{The probabilistic method}.\hskip 1em plus
  0.5em minus 0.4em\relax John Wiley \& Sons, 2004.

\bibitem{Rompel90}
J.~T. Rompel, ``Techniques for computing with low-independence randomness,''
  Ph.D. dissertation, Massachusetts Institute of Technology, Cambridge, MA,
  USA, 1990.

\bibitem{EC:WolWul06}
S.~Wolf and J.~Wullschleger, ``Oblivious transfer is symmetric,'' in
  \emph{Advances in Cryptology -- {EUROCRYPT}~2006}, ser. Lecture Notes in
  Computer Science, S.~Vaudenay, Ed., vol. 4004.\hskip 1em plus 0.5em minus
  0.4em\relax Springer, May~/~Jun. 2006, pp. 222--232.

\end{thebibliography}
\end{document}